\newcommand{\lambdabp}{\stackrel{\neg}{\lambda}_{h}(p^{\mu})}
\newcommand{\lambdap}{\lambda_{h}(p^{\mu})}
\newcommand{\gm}{\gamma_{\mu}}
\begin{document}
\title{Remarks on mass dimension one fermions: The underlying aspects, bilinear forms, Spinor Classification and RIM decomposition}
\author{R. J. Bueno Rogerio\inst{1}\fnmsep\thanks{\email{rodolforogerio@unifei.edu.br}} \and C. H. Coronado Villalobos\inst{2} \and D. Beghetto\inst{3} \and A. R. Aguirre\inst{1} }
\institute{Universidade Federal de Itajub\'a, Instituto de F\'isica e Qu\'imica - IFQ/UNIFEI,  \\
Av. BPS 1303, CEP 37500-903, Itajub\'a, MG, Brazil. \and Instituto Nacional de Pesquisas Espaciais (INPE) Ci\^encias Espaciais e Atmosf\'ericas\\
CEP 12227-010, S\~ao Jos\'e dos Campos, SP, Brazil. \and Universidade Estadual Paulista ``Julio de Mesquita Filho'' UNESP - Campus de Guaratinguetá - DFQ, Avenida Dr. Ariberto Pereira da Cunha, 333, CEP 12516-410, Guaratinguetá, SP, Brazil.}
\abstract{In the present essay we review the underlying physical information behind the first concrete example describing a mass dimension one fermion - namely Elko spinors. We start the program exploring the physical information by evaluating the Elko bilinear forms, both within the proper orthochronous Lorentz subgroup as well as within the VSR theory. As we shall see, such structures do not hold the right observance of the Fierz-Pauli-Kofink quadratic relations. Thus, by the aforementioned reasons, we develop a deformation of the Clifford algebra basis. Such protocol can be accomplished by taking precisely the right Elko’s dual structure during the construction of the bilinear forms related to these spinors. With the appropriated bilinear forms at hands, we search for a real physical interpretation in order to achieve a deeper understanding of such spinor fields. Aiming an interesting application, we present a relation concerning Elko spinors and the neutrino physics via the Heisenberg non-linear theory by means of a bijective linear map between Elko spinors and the so-called Restricted Inomata-McKinley (RIM) spinors. Thus, we describe some of its properties. Some interesting results concerning the construction of RIM-decomposable spinors emerge from such prescription. 
} 
\maketitle
\section{Opening: Looking Beyond the Standard Model}
The so-called Elko spinors compose a new set of spinors with a complex and interesting structure on its own. Historically, they were proposed by Ahluwalia and Grumiller when studying properties of Majorana spinors. Although Elko spinors share with Majorana spinors the property of being eingenspinors of the charge conjugation operator, $C$, they have dual helicity and can take positive (self-conjugated) and negative (anti-self-conjugated) eigenvalues of $C$, whilst the Majorana ones take only the positive value.

From the physical point of view, Elko spinors are constructed to be invisible to the other fields (i.e., it does not couple with the fields of the Standard Model, except for the Higgs boson), becoming a natural candidate to dark matter \cite{jcap}. Mathematically, the dual helicity peculiarity forces one to redefine the adjoint structure, as can be seem in \cite{jcap}. This idiosyncrasy reflects itself when constructing the spin sums for the Elko spinors, which breaks Lorentz symmetry. However, such a pathology can be easily circumvented if one decides to work on the Very Special Relativity framework \cite{cohen} - the spin sums are invariant under transformations of VSR groups \cite{cohen,horvath,ahluwaliahorvath}. There are several areas in which Elko spinors have been studied, from accelerator physics \cite{fen1,fen2,fen3,fen4,rogerio2019,rogerioalternative} to cosmology \cite{co1,co2,co3,co4,co5,co6,co7,co8,co9,co10,co11} and mathematical physics \cite{dualtipo4,beyondlounesto,restrictedinomata,juliodual}. In particular, the appreciation of new dual structures bring interesting possibilities within the algebraic scope \cite{cjr,rcd}. 

As it is well known, much of the physics associated to spinor fields is unveiled from its bilinear covariants by the simple reason that single fermions are not directly experienced. In this context it is indeed important to pay special attention to the subtleties of Clifford algebra when associating real numbers to the bilinear covariants \cite{lounestolivro}. It may sound as a secondary issue, but in fact the opposite is true. In two outstanding papers in the nineties \cite{crawford1,crawford2}, Crawford worked out several important formalizations concerning the bispinor algebra. Among these, a rigorous procedure made to obtain real bilinear covariants was developed. One of the aims of this paper is to make use of this procedure to envisage what (if any) bilinear covariants are real when dealing with mass dimension one spinors. To the best of our hope, the results to be shown here may shed some light on the observables associated to these spinors. With suitable, but important, changes we take advantage of the formalism developed in \cite{crawford1} in order to study the bilinear covariants associated to the Elko spinor case. After a complete analysis, including the right observance of the Fierz-Pauli-Kofink (FPK) \cite{FPK,beres} relations, we arrive at the subset of real bilinear covariants, which, in the lights of \cite{beyondlounesto}, can be fully interpreted. 

Quite recently, new possibilities concerning a redefinition on the adjoint field was deeply investigated \cite{aaca,vicinity}. These formalizations may lead to a local and full Lorentz spin $1/2$ field also endowed with mass dimension one, clearly evading the Weinberg's no-go theorem \cite{nogo}. We have investigated the bilinear forms on this case, however it leads to similar conclusions, i.e., the physical information encoded on the spinors remain unchanged. 
Remarkably enough, concerning to the physical aspects of its recent formulation and its insertion in the irreducible representation of the Poincar\'e symmetries we call attention to the Wigner's  work, where the physical content is supported by the Hilbert space under the Poincar\'e group action \cite{Wigner1,Wigner2} was found consistently one particle states \cite{controversy}. The triumph of the aforementioned Wigner's works is the emergence of hidden particle classes --- it turns out that the particle studied in \cite{aaca} behaves, under discrete symmetries, in a way predicted in one of these hidden cases.  

As an application, we take advantage of some peculiarities of Elko spinors and decompose (map) it into RIM (Restricted Inomata-McKinley) spinors. The so called Inomata-McKinley spinors are a particular class of solutions of the non-linear Heisenberg equation \cite{akira}. A subclass of Inomata-McKinley spinors called Restricted Inomata-McKinley (RIM) spinors was revealed to be useful in describing neutrino physics \cite{novello}. It is well known that free linear massive (or massless) Dirac fields can be represented as a combination of RIM-spinors \cite{novello}. Moreover, it was shown \cite{RIM} that such Dirac spinors are necessarily type-1 in the so-called Lounesto classification, and that they are all non exotic spinors, i.e., the spacetime itself needs to have an underlying trivial topology\footnote{By trivial topology we mean that the fundamental group $\pi_1(M)$ associated to the spacetime manifold $M$ is trivial, i.e., $\pi_1(M) = 0$. A non trivial topology, thus, means that $\pi_1(M) \neq 0$.} in order to enable the very existence of RIM-spinors. Thus, the decomposition in terms of RIM-spinors itself is not allowed in a spacetime with non trivial topology.

The very idea of mapping Elko and Dirac spinor fields is not new \cite{map1,hoffdirac,Julio}. However, the works developed towards this proposal use Elko as being a type-5 spinor field within Lounesto classification, taking the bilinear covariants associated to this class as fundamental elements in the construction of the map. It is well known that Elko fields do not fulfill the requisites to fit in the Lounesto Classification (for more details, please, check for \cite{bilineares}), since their dual structure is defined in a different way than the usually imposed in such classification. Moreover, Elko spinors are governed by a whole non-usual dynamics, carrying a new and different physical content. Then, a true map between Dirac and Elko spinors is, in fact, a map between different spinor spaces. Thus, to transcend the need of using the bilinear structures associated to the spinors would be welcome in such attempt to construct the aforementioned map. Here we will present a construction of such a map in the case of those Dirac fields representable in terms of RIM spinors \cite{RIM}.

This paper is organized as follows: In Section \ref{entrada} we introduce the formal aspects of the Elko spinors. Thus, Sect.\ref{secaobilineares} is reserved for a short and elementary review concerning the main concepts on the bilinear forms, Lounesto's classification and Fierz-Pauli-Kofink (FPK) identities. In Section \ref{elkoforms} we perform a deep analysis on Elko's bilinear forms. Both these sections are related to Elko spinors as objects belonging to orthochronous proper Lorentz subgroup. In Sect.\ref{secaofpk} we use the very ideia of Clifford algebra basis deformation aiming to provide to the Elko spinors the right observance of the FPK identities via the redefinition of the bilinear forms. Regarding to the physical observables, we shall finalize our analysis by defining how the Elko's bilinear forms transform under Lorentz transformations. Finally, in Sect.\ref{secaoHeisenberg} we provide some details and applications, bijectively mapping Elko spinors into RIM spinors. Such a procedure provides interesting outcomes as the possibility to write Elkos spinors in terms of Dirac or RIM spinors and \emph{vice-versa}. We leave to the Appendix \ref{apendiceA} an analogous investigation as developed in Sect. \ref{secaobilineares}, however, now taking into account Elko spinors within VSR framework. Appendix \ref{apendiceB} is devoted to some FPK identities transformations.

\section{Foreword: Main concepts on the Elko spinors formal structure}\label{entrada}

This section is reserved for bookkeeping purposes. Here we shall describe the basic introductory elements concerning spinors and its transformations laws - remarking the main aspects that will be carried along the paper. 

\subsection{Setting up the notation and an overview on spinors}\label{conceitos}
To obtain an explicit form of a given $\psi(p^{\mu})$ spinor we first call our attention to the rest spinors, $\psi(k^{\mu})$. For an arbitrary momentum $(p^{\mu})$, we have the following condition
\begin{equation}\label{1}
\psi(p^{\mu}) = e^{i\kappa.\varphi}\psi(k^{\mu}),
\end{equation}
where the $\psi(k^{\mu})$ rest frame spinor is a direct sum of the $(1/2,0)$ and $(0,1/2)$ Weyl spinor, which usually is defined as 
\begin{equation} \label{espinorqualquer}
\psi(k^{\mu}) = \left(\begin{array}{c}
\phi_R(k^{\mu}) \\ 
\phi_L(k^{\mu})
\end{array} \right).
\end{equation}
Note that we define the $k^{\mu}$ rest frame momentum as 
\begin{equation}
k^{\mu}\stackrel{def}{=}\bigg(m,\; \lim_{p\rightarrow 0}\frac{\boldsymbol{p}}{p}\bigg), \; p=|\boldsymbol{p}|,
\end{equation}
moreover, the general four-momentum (in spherical coordinates) is
\begin{equation}
p^{\mu}=(E, p\sin\theta\cos\phi, p\sin\theta\sin\phi, p\cos\theta).
\end{equation}
Thus, the boost operator is defined as follows
\begin{eqnarray}\label{boostoperator}
e^{i\kappa.\varphi} = \sqrt{\frac{E+m}{2m}}\left(\begin{array}{cc}
\mathbbm{1}+\frac{\vec{\sigma}.\hat{p}}{E+m} & 0 \\ 
0 & \mathbbm{1}-\frac{\vec{\sigma}.\hat{p}}{E+m}
\end{array} \right),
\end{eqnarray}
this yields $\cosh\varphi = E/m, \sinh\varphi=p/m$ with $\hat{\boldsymbol{\varphi}} = \hat{\boldsymbol{p}}$.

Thus, such momentum parametrization allow us to defined the right-hand and left-hand components, in the rest-frame referential, under inspection of the helicity operator it directly provide
\begin{equation}\label{helicidadee}
\vec{\sigma}\cdot\hat{p}\; \phi^{\pm}(k^{\mu}) = \pm \phi^{\pm}(k^{\mu}).
\end{equation}
Thus, the positive helicity component is given by 
\begin{equation}
\phi^{+}(k^{\mu}) = \sqrt{m}e^{ i\vartheta_{1}}\left(\begin{array}{c}
\cos(\theta/2)e^{-i\phi/2} \\ 
\sin(\theta/2)e^{i\phi/2}
\end{array}\right), 
\end{equation} 
and the negative helicity reads
\begin{equation}
\phi^{-}(k^{\mu}) = \sqrt{m}e^{ i\vartheta_{2}}\left(\begin{array}{c}
\sin(\theta/2)e^{-i\phi/2} \\ 
-\cos(\theta/2)e^{i\phi/2}
\end{array}\right).
\end{equation} 

As remarked in Ref \cite{mdobook}, the presence of the phase factor becomes necessary to set up the framework of eigenpinors of parity or charge conjugation operators or eigenspinors of parity operator. One can verify that under a rotation by an angle $\vartheta$ the Dirac spinors pick up a global phase $e^{\pm i\vartheta/2}$, depending on the related helicity. However, this only happens for eigenspinors of parity operator. For the eigenspinors of charge conjugation operator, the phases factor must be $\vartheta_1=0$ and $\vartheta_2=\pi$ \cite{aaca}. Thus, this judicious phase combination ensure for the field the character of locality - further details can be found in \cite{mdobook}. Note that we do not make any allusive comment concerning how is the link among the spinorial component in the introduced spinor in \eqref{espinorqualquer}. As a matter of fact, it is well-known that if parity is introduced as a fundamental link between the $(0,j)$ and $(j,0)$ representation spaces, automatically the Dirac dynamic is reached \cite{ryder,speranca}. 

\subsection{Introducing Elko spinors}
Now, guided by \cite{aaca,ramond}, let us introduce an element to link the $(0,1/2)$ and $(1/2,0)$ representation spaces as follows 
\begin{eqnarray}
\lambda(k^{\mu}) = \left(\begin{array}{c}
\zeta\Theta\phi^*_L(k^{\mu}) \\ 
\phi_L(k^{\mu})
\end{array} \right)
\end{eqnarray} 
where $\zeta$ stands for a constant phase factor, to be further determined, while $\Theta$ is the Wigner time-reversal operator, which in the spin $1/2$ representation read \cite{jcap,ramond}
\begin{eqnarray}
\Theta = \left(\begin{array}{cc}
0 & -1 \\ 
1 & 0
\end{array}\right),
\end{eqnarray} 
such operator carry the following interesting characteristic
\begin{equation}\label{mmp}
\Theta\sigma\Theta^{-1} = -\sigma^*,
\end{equation}  
where $\sigma$ stands for the Pauli matrices. Considering the transformation properties of the right-hand and left-hand  spinors, we can see the following: if $\phi_R(p^\mu)$ transforms as $(1/2,0)$, then $[\zeta\Theta \,\phi_R^\ast(p^\mu)]$ transforms as $(0,1/2)$ spinor, and similarly if $\phi_L(p^\mu)$ transforms as $(0,1/2)$, then $[\zeta\Theta\, \phi_L^\ast(p^\mu)]$ transforms as $(1/2,0)$ spinor, as observed in \cite{ramond}.

Looking towards defining the precise value for the phase $\zeta$, one would impose to the spinors to hold conjugacy under charge conjugation operator. Such task is accomplished by 
\begin{eqnarray}\label{conjugacaoelko}
C\lambda^{S/A}(p^{\mu}) = \pm\lambda^{S/A}(p^{\mu}),
\end{eqnarray}
where $C$ is written in the fashion
\begin{eqnarray}\label{cc}
C = \left(\begin{array}{cc}
\mathbb{O} & i\Theta \\ 
-i\Theta & \mathbb{O}
\end{array} \right)\mathcal{K},
\end{eqnarray}
in which $\mathcal{K}$ is the algebraic complex conjugation. The relation \eqref{conjugacaoelko} is immediately satisfied if the following restriction $\zeta=\pm i$ is taken into account. Notice that Elko spinors form a complete set of eigenspinors of $C$ with positive ($S$) and negative ($A$) eigenvalues under action of $C$. 

A parenthetic remark related with Elko neutrality under action of charge conjugation operator is highlighted here. Without introducing the associated quantum field --- which can be seen in Ref \cite{mdobook} --- here we make a pause and weave allusive comments concerning its possible interactions. The interactions of the such fermions are restricted to dimension-four quartic self-interaction and also to a dimension-four coupling with Higgs boson. This observations opens up the possibility that the new field provides a natural self-interacting dark matter candidate. Thus, the \emph{darkness} arises from two related facts: first, due to mass dimension mismatch of the Standard Model fermions and the mass dimension one fermions the latter cannot enter the SM doublets; and second, the formalism for mass dimension one fermions does not support the SM local gauge interactions \cite{aaca}.

 Thus, immediately follows that the four rest spinors are shaped by two \emph{self-conjugate}
\begin{equation}\label{Elkos}
\lambda^S_{\lbrace-,+\rbrace}(k^{\mu}) = \left(\begin{array}{c}
+i\Theta[\phi^+_L(k^{\mu})]^* \\ 
\phi^+_L(k^{\mu})
\end{array}\right) , \qquad  \lambda^S_{\lbrace+,-\rbrace}(k^{\mu}) = \left(\begin{array}{c}
+i\Theta[\phi^-_L(k^{\mu})]^* \\ 
\phi^-_L(k^{\mu})
\end{array}\right), 
\end{equation} 
and the other two are \emph{anti-self-conjugate}
\begin{equation}\label{Elkoa}
\lambda^A_{\lbrace-,+\rbrace}(k^{\mu}) = \left(\begin{array}{c}
-i\Theta[\phi^+_L(k^{\mu})]^* \\ 
\phi^+_L(k^{\mu})
\end{array}\right) , \qquad  \lambda^A_{\lbrace+,-\rbrace}(k^{\mu}) = \left(\begin{array}{c}
-i\Theta[\phi^-_L(k^{\mu})]^* \\ 
\phi^-_L(k^{\mu})
\end{array}\right).  
\end{equation}
We remark that the lower index stands for the spinor's helicity, the first entry is related with the right-hand component helicity and the second one stands for the left-hand component helicity. An important remark related to such spinors is that they are constructed in such a way that carry dual helicity. To explicit show the aforementioned feature, we first complex conjugate Eq. \eqref{helicidadee} 
\begin{equation}\label{equ:3.13}
\sigma^*\cdot\hat{p} [\phi_L^{\pm}(k^{\mu})]^* = \pm [\phi_L^{\pm}(k^{\mu})]^* .
\end{equation}
Replacing $\sigma^*$ by the relation in Eq. \eqref{mmp}
\begin{equation}
\Theta\sigma\Theta^{-1} \hat{p}[\phi_L^{\pm}(k^{\mu})]^* = \mp [\phi_L^{\pm}(k^{\mu})]^*,
\end{equation}
now, taking into account $\Theta^{-1} = -\Theta$, one find
\begin{equation}
-\Theta\sigma\Theta \hat{p}[\phi_L^{\pm}(k^{\mu})]^* = \mp [\phi_L^{\pm}(k^{\mu})]^*,
\end{equation}
such relation can be displayed as
\begin{equation}\label{helicdireita}
\Theta^{-1}\sigma\Theta \hat{p}[\phi_L^{\pm}(k^{\mu})]^* = \mp [\phi_L^{\pm}(k^{\mu})]^*.
\end{equation}
Finally, we multiply by the left-hand side both sides of \eqref{helicdireita} by $\Theta$, then, we conclude
\begin{equation}\label{equ:3.17}
\sigma\cdot\hat{p}\Theta[\phi_L^{\pm}(k^{\mu})]^* = \mp \Theta[\phi_L^{\pm}(k^{\mu})]^*.
\end{equation}
A quick inspection of $\Theta[\phi_L^{\pm}(k^{\mu})]^*$ says that it carry opposite helicity  when compared with $\phi_L^{\pm}(k^{\mu})$, such outcome strongly contrast with Dirac spinors which are endowed by single-helicity feature \cite{jcap}.

To define Elko spinors in a arbitrary momentum referencial, we may act with the boost operator \eqref{boostoperator} over the spinors, in the rest-frame, i.e.,
\begin{eqnarray}
\lambda^{S/A}_{\lbrace\pm,\mp\rbrace}(p^{\mu})= e^{i\kappa\varphi}\lambda^{S/A}_{\lbrace\pm,\mp\rbrace}(k^{\mu}). 
\end{eqnarray}
Thus, for the \emph{self-conjugated} fields it provides
\begin{equation}\label{equ:3.24}
\lambda^S_{\{-,+\}}(p^{\mu}) = \sqrt\frac{E+m}{2m}\left(1 - \frac{p}{E+m}\right)\lambda^S_{\{-,+\}}(k^{\mu}),
\end{equation}
and 
\begin{equation}\label{equ:3.25}
\lambda^S_{\{+,-\}}(p^{\mu}) = \sqrt\frac{E+m}{2m}\left(1 + \frac{p}{E+m}\right)\lambda^S_{\{+,-\}}(k^{\mu}).
\end{equation}
while for the \emph{anti-self-conjugated} ones we have
 \begin{equation}\label{equ:3.26}
\lambda^A_{\{-,+\}}(p^{\mu}) = \sqrt\frac{E+m}{2m}\left(1 - \frac{p}{E+m}\right)\lambda^A_{\{-,+\}}(k^{\mu}),
\end{equation}
and
\begin{equation}\label{equ:3.27}
\lambda^A_{\{+,-\}}(p^{\mu}) = \sqrt\frac{E+m}{2m}\left(1 + \frac{p}{E+m}\right)\lambda^A_{\{+,-\}}(k^{\mu}).
\end{equation} 
Henceforth, to summarize the notation, we choose to define the boost factor by
\begin{eqnarray}
\mathcal{B}_{\pm} = \sqrt{\frac{E+m}{2}}\bigg(1 \pm \frac{p}{E+m}\bigg).
\end{eqnarray}

Albeit the introduced objects hold a fermionic (and spinorial) character, they do not satisfy the Dirac equation
\begin{eqnarray}
\label{diracELKO21}&&\gamma_{\mu}p^{\mu} \lambda^S_{\{-,+\}}(p^{\mu}) = -im \lambda^S_{\{+,-\}}(p^{\mu}),\\
\label{diracELKO22}&&\gamma_{\mu}p^{\mu} \lambda^S_{\{+,-\}}(p^{\mu}) = im \lambda^S_{\{-,+\}}(p^{\mu}),\\                                                                                      
\label{diracELKO23}&&\gamma_{\mu}p^{\mu} \lambda^A_{\{+,-\}}(p^{\mu}) = -im \lambda^A_{\{-,+\}}(p^{\mu}),\\
\label{diracELKO24}&&\gamma_{\mu}p^{\mu} \lambda^A_{\{-,+\}}(p^{\mu}) = im \lambda^A_{\{+,-\}}(p^{\mu}).                                                                                                                                                                                                                                                                             \end{eqnarray}
the set of relation above can be summarized as
\begin{eqnarray}\label{tipodirac1}
(\gamma_{\mu}p^{\mu} \pm \Xi(p^{\mu})m)\lambda^{S/A}_{\alpha}(p^{\mu})=0,
\end{eqnarray}
note that $[\gamma_{\mu}p^{\mu}, \Xi(p^{\mu})]=0$, thus, it is also possible to write 
\begin{eqnarray}\label{tipodirac2}
(\gamma_{\mu}p^{\mu}\Xi(p^{\mu}) \pm m)\lambda^{S/A}_{\alpha}(p^{\mu})=0.
\end{eqnarray}
However, with the above relations at hand, it is possible to show that Elko fulfil the Klein-Gordon equation
\begin{equation}\label{kgelko}
(\Box + m^2) \lambda^{S/A}_{\alpha}(p^{\mu}) = 0,
\end{equation}
such a feature is a hint towards the Elko mass dimensionality.  

If one wishes to evaluate the Elko's norm under the Dirac's dual structure the following results may be faced
\begin{eqnarray}
\label{sub1}\bar{\lambda}^S_{\{\pm,\mp\}}(p^{\mu}) \lambda^S_{\{\pm,\mp\}}(p^{\mu}) = 0,
\\
\label{sub2}\bar{\lambda}^S_{\{\pm,\mp\}}(p^{\mu}) \lambda^A_{\{\pm,\mp\}}(p^{\mu}) = 0,
\\
\label{sub3}\bar{\lambda}^S_{\{\pm,\mp\}}(p^{\mu}) \lambda^A_{\{\pm,\mp\}}(p^{\mu}) = 0,
\\
\label{sub4}\bar{\lambda}^A_{\{\pm,\mp\}}(p^{\mu}) \lambda^A_{\{\pm,\mp\}}(p^{\mu}) = 0,
\\
\label{sub5}\bar{\lambda}^A_{\{\pm,\mp\}}(p^{\mu}) \lambda^S_{\{\pm,\mp\}}(p^{\mu}) = 0, 
\\
\label{sub6}\bar{\lambda}^A_{\{\pm,\mp\}}(p^{\mu}) \lambda^S_{\{\pm,\mp\}}(p^{\mu}) = 0, 
\end{eqnarray}
besides the imaginary ones \cite{aaca}
\begin{eqnarray}
\label{sub7}\bar{\lambda}^S_{(\pm,\mp)}(p^{\mu}) \lambda^S_{(\mp,\pm)}(p^{\mu}) = \mp 2im,
\\
\label{sub8}\bar{\lambda}^A_{(\pm,\mp)}(p^{\mu}) \lambda^A_{(\mp,\pm)}(p^{\mu}) = \pm 2im.
\end{eqnarray}
Such results above force us to abandon the Dirac structure and deal with a more properly definition of the adjoint. 
We can define the dual by a general formula
\begin{eqnarray}
\stackrel{\neg}{\lambda}\;^{S/A}_{h}(p^{\mu}) =[\Xi(p^{\mu})\lambda^{S/A}_{h}(p^{\mu})]^{\dag}\gamma_0,
\end{eqnarray}
and the relations \eqref{sub1}-\eqref{sub8} explicitly indicates a definition of the operator $\Xi(p^{\mu})$ by
\begin{eqnarray}\label{XI}
\Xi(p^{\mu}) \equiv \frac{1}{2m}&\Big(& \lambda^{S}_{\lbrace +-\rbrace}(p^{\mu})\bar{\lambda}^{S}_{\lbrace +-\rbrace}(p^{\mu}) + \lambda^{S}_{\lbrace -+\rbrace}(p^{\mu})\bar{\lambda}^{S}_{\lbrace -+\rbrace}(p^{\mu}) 
\nonumber\\ 
&-& \lambda^{A}_{\lbrace +-\rbrace}(p^{\mu})\bar{\lambda}^{A}_{\lbrace +-\rbrace}(p^{\mu})-\lambda^{A}_{\lbrace -+\rbrace}(p^{\mu})\bar{\lambda}^{A}_{\lbrace -+\rbrace}(p^{\mu}) \Big),
\end{eqnarray} 
which in its matricial form reads
\begin{eqnarray}\label{xioperator}
 \Xi(p^{\mu})= \left(\begin {array}{cccc} {\frac {ip\sin \left( \theta
 \right) }{m}}&{\frac {-i \left( E+p\cos \left( \theta \right) 
 \right) {{\rm e}^{-i\phi}}}{m}}&0&0\\ \noalign{\medskip}{\frac {i
 \left( E-p\cos \left( \theta \right)  \right) {{\rm e}^{i\phi}}}{m}}&
{\frac {-ip\sin \left( \theta \right) }{m}}&0&0\\ \noalign{\medskip}0&0
&{\frac {-ip\sin \left( \theta \right) }{m}}&{\frac {-i \left( E-p\cos
 \left( \theta \right)  \right) {{\rm e}^{-i\phi}}}{m}}
\\ \noalign{\medskip}0&0&{\frac {i \left( E+p\cos \left( \theta
 \right)  \right) {{\rm e}^{i\phi}}}{m}}&{\frac {ip\sin \left( \theta
 \right) }{m}}\end {array} \right).\nonumber\\ 
\end{eqnarray}
Note that $\Xi^{2}(p^{\mu}) = \mathbbm{1}$, and $\Xi^{-1}(p^{\mu})$ indeed exists and is equal to $\Xi(p^{\mu})$ itself \cite{aaca}. The action of the $\Xi(p^{\mu})$ over the Elko spinors, provides the following set of relations
\begin{eqnarray}
&&\Xi(p^{\mu})\lambda^S_{\{-,+\}}(p^{\mu})=+i\lambda^S_{\{+,-\}}(p^{\mu}), \\
&&\Xi(p^{\mu})\lambda^S_{\{+,-\}}(p^{\mu})=-i\lambda^S_{\{-,+\}}(p^{\mu}), \\
&&\Xi(p^{\mu})\lambda^A_{\{-,+\}}(p^{\mu})=+i\lambda^A_{\{+,-\}}(p^{\mu}),\\
&&\Xi(p^{\mu})\lambda^A_{\{+,-\}}(p^{\mu})=-i\lambda^A_{\{-,+\}}(p^{\mu}).
\end{eqnarray}
Now the explicit form for the dual can be readily written as 
\begin{eqnarray}
\stackrel{\neg}{\lambda}\;^{S/A}_{\lbrace-,+\rbrace}(p^{\mu}) = +i[\lambda^{S/A}_{\lbrace+,-\rbrace}(p^{\mu})]^{\dag}\gamma_0,
 \\\nonumber\\
\stackrel{\neg}{\lambda}\;^{S/A}_{\lbrace+,-\rbrace}(p^{\mu}) = -i[\lambda^{S/A}_{\lbrace-,+\rbrace}(p^{\mu})]^{\dag}\gamma_0.
\end{eqnarray} 
The above defined Elko's dual structure hold the orthonormal relations 
\begin{eqnarray}
&&\label{orto1}\stackrel{\neg}{\lambda}^{S}_{\alpha}(p^{\mu})\lambda^{S}_{\alpha^\prime}(p^{\mu}) = 2m \delta_{\alpha\alpha^{\prime}},
\\
&&\label{orto2}\stackrel{\neg}{\lambda}^{A}_{\alpha}(p^{\mu})\lambda^{A}_{\alpha^\prime}(p^{\mu}) = -2m \delta_{\alpha\alpha^{\prime}},
\\
&&\stackrel{\neg}{\lambda}^{S}_{\alpha}(p^{\mu})\lambda^{A}_{\alpha^\prime}(p^{\mu}) = \stackrel{\neg}{\lambda}^{A}_{\alpha}(p^{\mu})\lambda^{S}_{\alpha^\prime}(p^{\mu})=0. 
\end{eqnarray}

Moreover, the aforementioned results leads, after a bit of calculation, to the following spin sums
\begin{eqnarray}
\sum_h \lambda^{S/A}_h(p^\mu)\stackrel{\neg}{\lambda}\;^{S/A}_h(p^\mu)=\pm m[\mathbbm{1}\pm \mathcal{G}(\phi)],\label{4}
\end{eqnarray} 
where 
\begin{eqnarray}
\mathcal{G}(\phi) = \left(\begin{array}{cccc}
0 & 0 & 0 & -ie^{-i\phi} \\ 
0 & 0 & ie^{i\phi} & 0 \\ 
0 & -ie^{-i\phi} & 0 & 0 \\ 
ie^{i\phi} & 0 & 0 & 0
\end{array}  \right).
\end{eqnarray}
It is important to remark that 
\begin{eqnarray}
\mathcal{G}(\phi)\lambda^{S}_{\alpha}(p^{\mu}) = +\lambda^{S}_{\alpha}(p^{\mu}), \quad \mathcal{G}(\phi)\lambda^{A}_{\alpha}(p^{\mu}) = -\lambda^{A}_{\alpha}(p^{\mu}),
\end{eqnarray} 
as an interesting relation related with such operator is that it may be written as $\mathcal{G}(\phi) = m^{-1}\gamma_{\mu}p^{\mu}\Xi(p^{\mu})$. Nonetheless, note that the Elko spinors are eigenspinors of the $\mathcal{G}(\phi)$ operator, however, such relations above can not be faced as a kinematic relation due the fact that it does not carry time-dependence.   
As highlighted above, the spin sums carry a term which is $\phi$-dependent, revealing a subtle break in the Lorentz covariance, which may forbid and make it difficult to achieve some physical interpretations. Here is where the usual Elko construction stops.  

\subsection{A guide for Elko's adjoint redefinition}
An attempt to solve the Elko breaking Lorentz covariance that emerges in the spin sums \cite{vicinity}, is given by a redefinition in the dual structure, as it can be seen in Ref. \cite{aaca}. Such redefinition reads
\begin{eqnarray}
\stackrel{\neg}{\lambda}^{S}_h(p^\mu) &\rightarrow& \stackrel{\neg}{\lambda}\;^{S}_h(p^\mu)\mathcal{A}, \\
\stackrel{\neg}{\lambda}^{A}_h(p^\mu) &\rightarrow &\stackrel{\neg}{\lambda}\;^{A}_h(p^\mu)\mathcal{B},
\end{eqnarray}
where the operators $\mathcal{A}$ and $\mathcal{B}$ demand some important properties: the spinors $\lambda^{S}_h(p^\mu)$ and $\lambda^{A}_h(p^\mu)$ must be eigenspinors of $\mathcal{A}$ and $\mathcal{B}$ respectively, with eigenvalues given by the unity:
\begin{eqnarray}\label{7}
\mathcal{A}\lambda^{S}_h(p^\mu) = \lambda^{S}_h(p^\mu), \quad\quad \mathcal{B}\lambda^{A}_h(p^\mu)=\lambda^{A}_h(p^\mu).
\end{eqnarray} 
Besides, such operators must fulfill 
\begin{eqnarray}\label{8}
\stackrel{\neg}{\lambda}^{S}_h(p^\mu)\mathcal{A}\lambda^{A}_h(p^\mu) = 0, \quad\quad \stackrel{\neg}{\lambda}^{A}_h(p^\mu)\mathcal{B}\lambda^{S}_h(p^\mu) = 0.
\end{eqnarray}
The set of equations \eqref{7} and \eqref{8} ensures the accuracy of the orthonormality relations, as remarked in \cite{aaca}, to remain unchanged. With the new dual structure in mind, one can evaluate the spin sums, which now will take the operators $\mathcal{A}$ and $\mathcal{B}$ into account. A direct calculation leads to
\begin{eqnarray}\label{spinsumab}
\sum_h \lambda^{S}_h(p^\mu)\stackrel{\neg}{\lambda}^{S}_h(p^\mu)=m[\mathbbm{1}+\mathcal{G}(\phi)]\mathcal{A}, \\
\sum_h \lambda^{A}_h(p^\mu)\stackrel{\neg}{\lambda}^{A}_h(p^\mu)=-m[\mathbbm{1}-\mathcal{G}(\phi)]\mathcal{B}.
\end{eqnarray}

Now, in order to acquire Lorentz covariant spin sums, it could be imposed that $\mathcal{A}$ and $\mathcal{B}$ are simply the inverse of $[\mathbbm{1}+\mathcal{G}(\phi)]$ and $[\mathbbm{1}-\mathcal{G}(\phi)]$ respectively. However, $\det[\mathbbm{1}\pm \mathcal{G}(\phi)]=0$, and then this naive approach does not work \cite{vicinity}. An interesting general inverse was introduced in the literature by the outstanding works of Moore, and Penrose, which are summarized in \cite{baratamoore}. In general grounds these works give a complete algorithm to find out the so-called pseudo-inverse, hereafter denoted by $M^{+}$, of a singular matrix $M$. 
Looking towards finding a matrix that can be really classified as an inverse to the Lorentz break part of the spin sums, use was made of a ``$\tau-$deformation'' \cite{aaca,vicinity}, writing 
\begin{eqnarray}\label{spinsumab2}
\sum_h \lambda^{S}_h(p^\mu)\stackrel{\neg}{\lambda}^{S}_h(p^\mu)&=&m[\mathbbm{1}+\tau G(\phi)]\mathcal{A}\vert_{\tau\rightarrow 1},
\end{eqnarray} and 
\begin{eqnarray}\label{outra}
\sum_h \lambda^{A}_h(p^\mu)\stackrel{\neg}{\lambda}^{A}_h(p^\mu)&=&-m[\mathbbm{1}-\tau G(\phi)]\mathcal{B}\vert_{\tau\rightarrow 1}.
\end{eqnarray} 
In order to the $\tau-$deformation makes sense it must have a well defined $\tau\rightarrow 1$ limit. In addition, this limit is the unique necessary constraint used. The investigation of the vicinity of $[\mathbbm{1}+\mathcal{G}(\phi)]$ \cite{baratamoore} corroborates this approach. 

We start from the fact that both matrices $\mathbbm{1}$ and $\mathcal{G}(\phi)$ are non-singular. Once it is really true, there exist constant numbers $r_1$ and $r_2$, which, in turn, depend on $\mathbbm{1}$ and $\mathcal{G}(\phi)$, under the condition $0<r_1\leq r_2$, such that $\mathbbm{1}+\tau \mathcal{G}(\phi)$ is invertible for all $\tau\in\mathbb{C}$ with $0<\mid\tau\mid<r_1$ and for all $\tau\in\mathbb{C}$ with $\mid\tau\mid>r_2$. Hence it is possible to see that 
\begin{eqnarray}
\mathbbm{1}+\tau \mathcal{G}(\phi)=[\mathbbm{1}\tau+\mathcal{G}^{-1}(\phi)]\mathcal{G}(\phi),\label{qq1}
\end{eqnarray} in such a way that $[\mathbbm{1}+\tau \mathcal{G}(\phi)]$ is invertible only if $[\mathbbm{1}\tau+\mathcal{G}^{-1}(\phi)]$ is non-singular. Now take $Z\equiv -\mathcal{G}^{-1}(\phi)$ and $\{\upsilon_1,\ldots\upsilon_k\}\in\mathbb{C}$ to be the (not necessarily distinct) $k$ roots of the polynomial $p_z=\det[Z-\upsilon\mathbbm{1}]$. Notice from Eq. (\ref{qq1}) that 
\begin{eqnarray}
\mathbbm{1}+\tau \mathcal{G}(\phi)=-[Z-\mathbbm{1}\tau]\mathcal{G}(\phi).\label{qq2}
\end{eqnarray}
If all roots stand null, one should take $r_1=r_2$ as any positive number. Defining, then, $r_1\equiv\min\{|\upsilon_k|,\upsilon_k\neq0\}$ and $r_2\equiv\max\{|\upsilon_k|\}$, we have two distinct ranges of values for $\tau$ (exactly the ranges without roots of $p_z$) for which $\tau$ enables $[\mathbbm{1}\tau+\mathcal{G}^{-1}(\phi)]$ to be non-singular, ensuring $[\mathbbm{1}+\tau \mathcal{G}(\phi)]$ invertible. These ranges are $\{\tau\in\mathbb{R},0<|\tau|<r_1\}$ and $\{\tau\in\mathbb{R},|\tau|>r_2\}$. As it can be seen, 
\begin{equation}
p_z=\upsilon^4-2\upsilon^2+1 \label{qq3}
\end{equation} 
have roots given by $\pm 1$, both with multiplicity two. Therefore, one can see that the unique constraint is given by $|\tau|>1$ or $0<|\tau|<1$. Hence, the limit taken in \cite{aaca} shows to be indeed valid and the operators $\mathcal{A}$ and $\mathcal{B}$ may well be chosen in order to give the precise inverses of (\ref{spinsumab2}) and (\ref{outra}), respectively. 

Thus, now one is able to precisely write the $\mathcal{A}$ and $\mathcal{B}$ operators as
\begin{eqnarray}
\mathcal{A} = 2[\mathbbm{1}+\mathcal{G}(\phi)]^{-1} = 2\bigg[\frac{\mathbbm{1}-\tau\mathcal{G}(\phi)}{1-\tau^2}\bigg], 
\\
\mathcal{B} = 2[\mathbbm{1}-\mathcal{G}(\phi)]^{-1} = 2\bigg[\frac{\mathbbm{1}+\tau\mathcal{G}(\phi)}{1-\tau^2}\bigg], 
\end{eqnarray} 
noting that $\mathcal{G}(\phi)^2=\mathbbm{1}$, we define the following spin sums
\begin{eqnarray}
\label{spinsumsA2}\sum_{\alpha} \lambda^S_{\alpha}(p^{\mu})\stackrel{\neg}{\lambda}^{S}_{\alpha}(p^{\mu})&=& 2m[\mathbbm{1}+\mathcal{G}(\phi)]\bigg[\frac{\mathbbm{1}-\tau\mathcal{G}(\phi)}{1-\tau^2}\bigg]|_{\tau\rightarrow 1},\nonumber\\ 
&=&2m\mathbbm{1},
\\
\label{spinsumaB2}\sum_{\alpha} \lambda^A_{\alpha}(p^{\mu})\stackrel{\neg}{\lambda}^{A}_{\alpha}(p^{\mu})&=& -2m[\mathbbm{1}-\mathcal{G}(\phi)]\bigg[\frac{\mathbbm{1}+\tau\mathcal{G}(\phi)}{1-\tau^2}\bigg]\vert_{\tau\rightarrow 1},\nonumber\\
&=&-2m\mathbbm{1},
\end{eqnarray}
now, providing a Lorentz invariant outcome. 
Interesting enough, even after the adjoint redefinition, all the physical information encoded in the spinors remains unchanged.

\section{The underlying features about the Bilinear Analysis}\label{secaobilineares}

Let $\psi$ be a spinor field belonging to a section of the vector bundle $\mathbf{P}_{Spin^{e}_{1,3}}(\mathcal{M})\times\, _{\rho}\mathbb{C}^4$ where $\rho$ stands for the entire representation space $D^{(1/2,0)}\oplus D^{(0,1/2)}$, or a given sector of such. Pertti Lounesto defined the bilinear covariants associated to $\psi$ as \cite{crawford1,crawford2}
\begin{eqnarray}
\label{covariantes}
\sigma=\psi^{\dag}\gamma_{0}\psi, \hspace{1cm}  \omega=-\psi^{\dag}\gamma_{0}\gamma_{0123}\psi, \hspace{1cm} \mathbf{J}=\psi^{\dag}\mathrm{\gamma_{0}}\gamma_{\mu}\psi\;\theta^{\mu}, \nonumber\\
\mathbf{K}=\psi^{\dag}\mathrm{\gamma_{0}}\textit{i}\mathrm{\gamma_{0123}}\gamma_{\mu}\psi\;\theta^{\mu},\hspace{1cm} \mathbf{S}=\frac{1}{2}\psi^{\dag}\mathrm{\gamma_{0}}\textit{i}\gamma_{\mu\nu}\psi\theta^{\mu}\wedge\theta^{\nu}\,,
\end{eqnarray} 
the elements $\{ \theta^\mu \}$ are the dual basis of a given inertial frame $\{ \textbf{e}_\mu \} = \left\{ \frac{\partial}{\partial x^\mu} \right\}$, with $\{x^\mu\}$ being the global spacetime coordinates and the Dirac matrices are, in the Chiral (or Weyl) representation,
\begin{eqnarray}
\gamma_0 = \left( \begin{array}{cc}
\mathbb{O} & \mathbbm{1} \\ 
\mathbbm{1} & \mathbb{O}
\end{array}  \right), \qquad \gamma_i = \left( \begin{array}{cc}
\mathbb{O} & \sigma_{i} \\ 
-\sigma_{i} & \mathbb{O}
\end{array}  \right).
\end{eqnarray}

In general grounds, it is always expected to associate (\ref{covariantes}) to physical observables. For instance, in the usual case, bearing in mind the relativistic description of the electron, $\sigma$ is the invariant length, $\boldsymbol{J}$ is associated to the current density, $\boldsymbol{K}$ is the spin projection in the momentum direction, and $\boldsymbol{S}$ is the momentum electromagnetic density. Remarkably, in \cite{elkomonopole}, authors stated that a Majorana field can not carry any electric or magnetic charge, i.e., the physical currents $e_{M}\boldsymbol{J}_{M}$ and $q_{M}\boldsymbol{K}_{M}$ are null because $e_{M}=q_{M}=0$. This is a valid attempt to understand such quantities, however, it is not unique.

 The bilinear covariants, as well known, obey the so-called Fierz-Pauli-Kofink (FPK) identities, given by \cite{baylis}
\begin{eqnarray}\label{fpkidentidades}
\boldsymbol{J}^2 = \sigma^2+\omega^2, \quad J_{\mu}\!\!&K_{\nu}&\!\!-K_{\mu}J_{\nu} = -\omega S_{\mu\nu} - \frac{\sigma}{2}\epsilon_{\mu\nu\alpha\beta}S^{\alpha\beta}, \nonumber
\\
J_{\mu}K^{\mu} &=& 0, \quad \boldsymbol{J}^2 = -\boldsymbol{K}^2,
\end{eqnarray}
in which 
\begin{eqnarray}
\boldsymbol{J}^2 = (J_\mu \theta^\mu)(J^\nu  \textbf{e}_\nu) = J_\mu J^\mu,
\end{eqnarray}
where we have used the definition of the dual basis, $\theta^\mu(\textbf{e}_\nu)=\delta^\mu_\nu$, and similarly $\boldsymbol{K}^2 = K_\mu K^\mu$, both clearly being scalars.
It can be seen that the physical requirement of reality can always be satisfied for Dirac spinors bilinear covariants \cite{crawford1}, by a suitable deformation of the Clifford basis leading to physical appealing quantities.
The above identities are fundamental not only for classification, but also to further assert the inversion theorem\cite{crawford1}. Such identities are also valid for arbitrary dimensions and signatures, please check Ref \cite{fpkarbitrario}. Within the Lounesto classification scheme, a non vanishing $\mathbf{J}$ is crucial, since it enables to define the so called boomerang \cite{lounestolivro} which has an ample geometrical meaning to assert that there are precisely six different classes of spinors according to their bilinear covariants. In fact, this is a prominent consequence of the definition of a boomerang. As far as the boomerang is concerned, it is not possible to exhibit more than six types of spinors, according to the bilinear covariants. Indeed, Lounesto spinor classification splits regular and singular spinors. The regular spinors are those which have at least one of the bilinear covariants $\sigma$ and $\omega$ non-null. Singular spinors, on the other hand, have $\sigma = 0 =\omega$, consequently the Fierz identities are normally replaced by more general conditions \cite{crawford1,crawford2}
\begin{eqnarray}\label{multi}
Z^{2}=4\sigma Z, \nonumber\\ 
Z\gamma_{\mu}Z=4J_{\mu}Z,\nonumber\\ 
Zi\gamma_{5}Z=4\omega Z,\\
Zi\gamma_{\mu}\gamma_{\nu}Z=4S_{\mu\nu}Z,\nonumber\\
Z\gamma_{5}\gamma_{\mu}Z=4K_{\mu}Z\nonumber.
\end{eqnarray}
When an arbitrary spinor $\xi$ satisfies $\xi^{*}\psi\neq 0$ and belongs to $\mathbb{C}\otimes \mathcal{C}\ell_{1,3}$ — or equivalently when $\xi^{\dagger}\gamma_{0}\psi\neq 0 \in \mathcal{M}(4, \mathbb{C})$— it is possible to recover the original spinor $\psi$ from its aggregate $Z$. Such relation is given by $\psi = Z\xi$, and the aggregate reads
\begin{eqnarray}\label{multivetorz}
Z=\sigma+\mathbf{J}+i\mathbf{S}+\mathbf{K}\gamma_{5}-i\omega\gamma_{5},
\end{eqnarray}
and the spinor $\xi$ is given by the so-called Takahashi algorithm. Hence, using \eqref{multivetorz} and taking into account that we are dealing with singular spinors, it is straightforward to see that the aggregate can be recast as
\begin{eqnarray}
Z=\mathbf{J}(1+i\mathbf{s}+ih\gamma_{0123}), \label{ZB}
\end{eqnarray} 
where $\mathbf{s}$ is a space-like vector orthogonal to $\mathbf{J}$, and $h$ is a real number \cite{lounestolivro}. The multivector as expressed in \eqref{ZB} is a boomerang. By inspecting the condition $Z^{2}=4\sigma Z$ we see that for singular spinors $Z^2=0$. However, in order to the FPK identities to hold it is also necessary that both conditions $\mathbf{J}^2=0$ and $(\mathbf{s}+h\gamma_{0123})^2=-1$ must be satisfied \cite{spinorrepresentation}.

Unfortunately, the same cannot be stated for mass dimension one spinors, as Elko spinors. Actually, a straightforward calculation shows an incompatibility in the usual construction of bilinear covariants. In fact, one of the FPK identities is violated. The reason rests upon the new dual structure associated to these spinors. It is worth to mention that the main difference between the Crawford deformation \cite{crawford1,crawford2} and the one to be accomplished here is that in the former case, the spinors are understood as Dirac spinors, i. e., spinorial objects endowed with single helicity. Therefore, the dual structure is the usual one $\bar{\psi}(p^{\mu}) = \psi^{\dag}(p^{\mu})\gamma_0$ and the required normalization is also ordinary. On the other hand Elko spinors, due to their own formal structure, need their dual to be redefine $\stackrel{\neg}{\lambda}\;^{S/A}_{h}(p^{\mu}) = [\Xi(p^{\mu})\lambda^{S/A}_{h}(p^{\mu})]^{\dag}\gamma_0$. This redefinition leads, ultimately, to a new normalization culminating in a basis deformation satisfying the FPK identities.

\section{The physical information behind the Elko bilinear forms}\label{elkoforms}
Let us make these assertions more clear by explicitly showing the previously mentioned problem. We leave for the Appendix \ref{apendiceA} a brief and self contained overview on the spinorial formal structure of Lorentz breaking Elko fields. Taking advantage of what was there defined, we use as an example the spinor $\lambda^{S}_{\lbrace -,+\rbrace}(p^{\mu})$ and its dual given by, respectively,
\begin{eqnarray}\label{3}
\lambda^{S}_{\lbrace -,+\rbrace}(p^{\mu}) = \mathcal{B}_{-} 
\left(\begin{array}{c}
-i\sin(\theta/2) e^{-i\phi/2} \\ 
i\cos(\theta/2) e^{i\phi/2} \\  
\cos(\theta/2) e^{-i\phi/2} \\ 
\sin(\theta/2) e^{i\phi/2}
\end{array}  \right),
\end{eqnarray}   
and
\begin{eqnarray*}\label{14}
\stackrel{\neg}{\lambda}\;^{S}_{\lbrace -,+\rbrace}(p^{\mu}) &=& 
\\ 
&\mathcal{B}_{+}&\left( \begin{array}{cccc}
-i\sin(\theta/2) e^{i\phi/2} & i\cos(\theta/2) e^{-i\phi/2} & -\cos(\theta/2) e^{i\phi/2} & -\sin(\theta/2) e^{-i\phi/2}
\end{array}  \right)\!.
\end{eqnarray*} We reinforce once again that the dual structure associated to Elko spinors is obtained in a very judicious fashion \cite{aaca}, leaving no space to modifications, exception made to the generalizations found in Ref. \cite{jcap,aaca}. Using (\ref{3}) and (\ref{14}), as a direct calculation shows, Eqs. (\ref{covariantes}) give 
\begin{eqnarray}
\sigma &=& -2m,\label{cov1}\\
\omega &=& 0,\label{cov2}\\
J_0 &=& 0, \\
J_1 &=& 2im\cos\theta\cos\phi, \\
J_2 &=& 2im\cos\theta\sin\phi,\\
J_3 &=& -2im\sin\theta, \label{cov3}\\
K_0 &=& 0,\\
K_1 &=& -2m\sin\phi,\\
K_2 &=& 2m\cos\phi,\\
K_3 &=& 0,\label{cov4}
\end{eqnarray}
and 
\begin{eqnarray}
S_{01} &=& -2im\sin\theta\cos\phi,\\
S_{02} &=& -2im\sin\theta\sin\phi,\\
S_{03} &=& -2im\cos\theta,\\
S_{12} &=& S_{13} = S_{23} = 0.\label{cov5}
\end{eqnarray}
As it can be verified, the above bilinear covariants do not respect the FPK quadratic relations. More specifically, the relation containing $S_{\mu\nu}$ is not fulfilled. In view of this problem, we revisited the formulation developed in \cite{crawford1} looking towards to find out an appropriated Clifford basis upon which the bilinear covariants can be constructed, leading to the right observance of the FPK relations. We highlight that the price to be paid is that only a subset of bilinear covariants comprises real quantities.

\section{A path to the Clifford algebra basis deformation: An alternative method to ensure the FPK relations}\label{secaofpk}

As it is well known, the constitutive relation of the Clifford algebra is given by 
\begin{eqnarray}\label{eita}
\lbrace \gamma_{\mu}, \gamma_{\nu}\rbrace = 2g_{\mu\nu}\mathbbm{1}, \quad \mu,\nu = 0,1,2,...,N-1,
\end{eqnarray}
where $g_{\mu\nu}$ is a $N=2n$ even-dimensional space-time metric, which in Cartesian coordinates has the form $diag(1,-1,...,-1)$. The rudimentary generators of the Clifford algebra are, then, the identity $\mathbbm{1}$ and the vectors $\gamma_\mu$, usually represented as square matrices. The standard approach dictates the complementation of the Clifford algebra basis, in order to ensure real bilinear forms. This complement is performed by the composition of the vector basis, used as building blocks \cite{crawford1} 
\begin{eqnarray}\label{gammatil}
\tilde{\gamma}_{\mu_{1}\mu_{2}...\mu_{N-M}} \equiv \frac{1}{M!}\epsilon_{\mu_{1}\mu_{2}...\mu_{N}}\gamma^{\mu_{N-M+1}}\gamma^{\mu_{N-M+2...}}\gamma^{\mu_{N}}.
\end{eqnarray} 
As it is easy to see, the lowest $M$ value is two (the smallest combination), nevertheless, it runs in the range $M=2,3,...,N$. In this respect, the elements that form the (real) Clifford algebra basis are 
\begin{eqnarray}\label{set}
\lbrace \Gamma_{i}\rbrace \equiv \lbrace \mathbbm{1}, \gm, \tilde{\gamma}_{\mu_{1}\mu_{2}...\mu_{N-2}},..., \tilde{\gamma}_{\mu},\tilde{\gamma}\rbrace,  
\end{eqnarray} where $\tilde{\gamma}\equiv \tilde{\gamma}_{\mu_{1}\mu_{2}...\mu_{N-N}}$.    

In view of the new elements appearing in the definition of the Elko dual, it is necessary to adequate the Clifford algebra basis complementation. As shown previously, it is absolutely necessary for the right appreciation of the FPK relations. We shall stress that for the Dirac spinorial case, the set (\ref{set}) is suitable deformed (by a slightly different normalization) in order to provide real bilinear covariants. We shall pursue something similar here, and we are successful in correction the problem related to the FPK relations. Notwithstanding, only a subset of bilinear covariants ends up real in the Elko spinorial case.  

The first two bilinear arising from the Clifford algebra basis are 
\begin{eqnarray}
\sigma &\equiv&\lambdabp\mathbbm{1}\lambdap,\label{sig}\\
J_{\mu} &\equiv&\lambdabp\gm\lambdap\label{jota},
\end{eqnarray}
where $\lambdabp =  [\Xi(p^{\mu})\lambdap]^{\dag}\gamma_0$ is identified as the Elko dual spinor. The operator $\Xi(p^{\mu})$ is responsible to change the spinor helicity, here labelled by $h$. The requirement $\sigma=\sigma^\dagger$ automatically leads to $\gamma_0 = \Xi^{\dag}(p^{\mu})\gamma_0^{\dag}\Xi(p^{\mu})$ since $\Xi^{2}(p^{\mu})=\mathbbm{1}$. This constraint is readily satisfied, in such a way that (\ref{sig}) ensures a real quantity. By the same akin reasoning, one should impose $J_{\mu} = J_{\mu}^{\dag}$. This requirement leads to the following constraint $\gamma_0\gm = \Xi^{\dag}(p^{\mu})\gm^{\dag}\gamma_0^{\dag}\Xi(p^{\mu})$ which, however, cannot be fulfilled in any case. In fact, the counterpart associated to Dirac spinors is simple and readily satisfy $\gamma_0^{-1}\gamma_{\mu}^\dagger\gamma_0=\gamma_\mu$, a constraint naturally achieved by construction. 

The Elko's dual structure has forced a new interpretation of the bilinear covariants. Note that, firstly, for mass dimension one spinors $J^\mu$ cannot be associated to the conserved current. Obviously, in order to have $\partial_\mu J^\mu=0$ it is sufficient to use the Dirac equation. This truism has lead to interesting algebraic possibilities \cite{cjr}, but the point to be emphasized here is that there is no problem in having a complex quantity related to the bilinear $J^\mu$. The additional important consequence of an imaginary $J^\mu$ is that in order to satisfy the FPK equations one must have $K^\mu$ or $S^{\mu\nu}$ also imaginary.  

Notice that, instead of the usual Crawford deformation, here we do not arrive at an entire real bilinear set. In fact, in trying to implement the full reality condition it is mandatory to change the building block of the Clifford basis $\gamma_\mu$. It would inevitably lead, however, to a change in the constitutive algebraic relation of the Clifford algebra (\ref{eita}). Therefore, this change must be excluded. It is important to emphasize, moreover, that even being willing to accept a modification of (\ref{eita}) the resulting constraint to get a real set of bilinears cannot be fulfilled.

Having said that, we may proceed deforming the usual basis in order to redefine bilinear covariants which satisfy the FPK identities. Making use of Eq. \eqref{gammatil} and considering that the norm for the Elko spinor is real we have 
\begin{eqnarray*}
[\lambdabp\tilde{\gamma}_{\mu_{1}\mu_{2}...\mu_{N-M}}\lambdap]^{\dag}= (-1)^{M(M-1)/2}\lambdabp\Xi(p^{\mu})\tilde{\gamma}_{\mu_{1}\mu_{2}...\mu_{N-M}}\Xi(p^{\mu})\lambdap. \label{urru}
\end{eqnarray*} It can be readily verified that the following redefinition of $\tilde{\gamma}_{\mu_{1}\mu_{2}...\mu_{N-M}}$ is appropriate to ensure $K^\mu$ as a real quantity:
\begin{eqnarray}\label{gammatil2}
\tilde{\gamma}_{\mu_{1}\mu_{2}...\mu_{N-M}}= (i^{M(M-1)/2}/M!)\Xi(p^{\mu})\epsilon_{\mu_{1}...\mu_{N}}\gamma^{\mu_{N-M+1}...}\gamma^{\mu_{N}}\Xi(p^{\mu}).
\end{eqnarray}
With the redefinition above, one is able to define the bispinor Clifford algebra basis as in (\ref{set}), but with the gammas given by (\ref{gammatil2}). As an example, consider the four-dimensional space-time. In this case the basis is given, accordingly, by 
\begin{eqnarray}
M &=& 4 \quad\Rightarrow\quad \tilde{\gamma} = -i \Xi(p^{\mu})\gamma_5\Xi(p^{\mu}), \\
M &=& 3 \quad\Rightarrow\quad \tilde{\gamma}_{\mu} = -\Xi(p^{\mu})\gamma_5\gamma_{\mu}\Xi(p^{\mu}),\\
M &=& 2 \quad\Rightarrow\quad  \tilde{\gamma}_{\mu\nu} = \frac{i}{2}\Xi(p^{\mu}) \gamma_{\mu}\gamma_{\nu} \Xi(p^{\mu}),
\end{eqnarray}
where $\gamma_5=-i\gamma_0\gamma_1\gamma_2\gamma_3$. Now, with the real Clifford algebra basis at hands, it is possible to construct the bilinear forms, given by
\begin{eqnarray}\label{Elkonewbilinear}
\mathbbm{1}&\quad\Rightarrow\quad&\sigma_{E} = \lambdabp\mathbbm{1}\lambdap, \nonumber \\
\gamma_{\mu}&\quad\Rightarrow\quad& J_{\mu_{E}}= \lambdabp\gamma_{\mu}\lambdap,\nonumber\\
\tilde{\gamma}&\quad\Rightarrow\quad& \omega_{E} = -i\lambdabp\gamma_5\lambdap,\\
\tilde{\gamma}_{\mu}&\quad\Rightarrow\quad& K_{\mu_{E}}= -\lambdabp\Xi(p^{\mu})\gamma_5\gamma_{\mu}\Xi(p^{\mu})\lambdap,\nonumber\\
\tilde{\gamma}_{\mu\nu}&\quad\Rightarrow\quad& S_{\mu\nu_{E}} = i\lambdabp \Xi(p^{\mu})\gamma_{\mu}\gamma_{\nu}\Xi(p^{\mu})\lambdap.\nonumber
\end{eqnarray}
From above construction one ensures that the slight modifications of the bilinear covariants are enough to ensure the right observance of the FPK identities \eqref{fpkidentidades}. After all, we arrive at $\sigma$ and $K^\mu$ as real and non-null quantities. 

A hint towards the interpretation of the bilinear forms is developed in \cite{beyondlounesto}, and we remark that such interpretation can be extended for any general spinor (charged and uncharged - independent of the spinor and its associated adjoint structure). Thus, $\sigma$ still stands for the invariant length (even known as mass term). Moreover, the four-vector $\textbf{J}$ may represent the four-velocity \cite{lounestolivro} as well as the electric current density for charged particles, whereas for neutral particles we are led to interpret it as remarked in \cite{giuntineutrino} the effective electromagnetic current four-vector. The bilinear $\textbf{K}$ shall be related with the spin alignment due to a coupling with matter or electromagnetic field. Lastly, $\textbf{S}$ is related to the electromagnetic moment density for charged particles, although for neutral particles we can infer that such physical quantity may correspond to the momentum spin-density, or, even steeped in the commentary woven in \cite{studenikinneutrino2}, it may represent spin precession (spin oscillation) in the presence of matter or electromagnetic fields. Such effect is caused by the neutral particle interaction with matter polarized by external magnetic field or, equivalently, by the interaction of the induced magnetic moment of a neutral particle with the magnetic field \cite{grigoneutrino,ahluwalianeutrino}. The meaning of the $\omega$ bilinear is, in this very moment,  not clear enough for us to get a physical interpretation.

\subsection{Investigation of the Elko's covariant structure}\label{secaocovariancia}

So far we have worked out quantities defined as (\ref{Elkonewbilinear}) claiming that they must be faced as bilinear covariants. While they are bilinear quantities, their covariant structure must be evinced. All the issue is related to the (necessary) presence of the $\Xi(p^{\mu})$ operator. 

Supposing that the Elko spinor belongs to a linear representation of the symmetry group in question, in such a way that seen by another frame the field undergoes a transformation as
\begin{eqnarray}\label{tl}
\lambda^{\prime S/A}_{h}(p^{\mu\prime}) = S(\Lambda) \lambda_{h}^{S/A}(p^{\mu}).
\end{eqnarray} There is a Dirac-like operator that annihilates Elko \cite{aaca} (not related to the field dynamics) given by
\begin{eqnarray}\label{eqdiracElko}
\big(\gamma_{\mu}p^{\mu}\Xi(p^{\mu}) \pm m\big)\lambda^{S/A}_{h}(p^{\mu})=0, 
\end{eqnarray} from which we shall investigate the covariance. Applying the transformation \eqref{tl} for the fields in the equation \eqref{eqdiracElko}, we find 
\begin{eqnarray}
\big(\gamma_{\mu}p^{\prime\mu}\Xi(p^{\mu}) \pm m\big)\lambda^{\prime S/A}_{h}(p^{\mu\prime})=0.
\end{eqnarray}  
The momentum can be written as $p^{\mu}\leftrightarrow i\partial^{\mu}$ and the partial derivative transforms usually as $\partial^{\prime\mu} = \Lambda^{\mu}_{\;\;\;\beta}\partial^{\beta}$. Therefore, in order to ensure covariance of Eq. (\ref{eqdiracElko}) it is necessary the following behavior of the Dirac matrices and the $\Xi(p^{\mu})$ operator, respectively
\begin{eqnarray}
 \gamma^{\prime}_{\beta} &=& S(\Lambda)\gamma_{\mu}S^{-1}(\Lambda)\Lambda^{\mu}_{\;\;\;\beta},\label{gamma1}\\
 \Xi^{\prime}(p^{\prime\mu}) &=& S(\Lambda)\Xi(p^{\mu}) S^{-1}(\Lambda).\label{xi1}
\end{eqnarray} Equation (\ref{gamma1}) is the usual requirement to be inputed to the gamma matrices in order to achieve a covariant Dirac equation. The requirement (\ref{xi1}) is the new ingredient of the Elko theory, which must be investigated. 

Interestingly enough, from the expression \eqref{XI}, along with \eqref{gamma1}, it is possible to see that \cite{aaca,speranca}
\begin{eqnarray*}
\Xi^{\prime}(p^{\prime\mu}) &=&\frac{1}{2m}\Big( \lambda^{\prime S}_{\lbrace +-\rbrace}(p^{\mu\prime})\bar{\lambda}^{\prime S}_{\lbrace +-\rbrace}(p^{\mu\prime}) + \lambda^{\prime S}_{\lbrace -+\rbrace}(p^{\mu\prime})\bar{\lambda}^{\prime S}_{\lbrace -+\rbrace}(p^{\mu\prime}) \\
&&- \lambda^{\prime A}_{\lbrace +-\rbrace}(p^{\mu\prime})\bar{\lambda}^{\prime A}_{\lbrace +-\rbrace}(p^{\mu\prime})-\lambda^{\prime A}_{\lbrace -+\rbrace}(p^{\mu\prime})\bar{\lambda}^{\prime A}_{\lbrace -+\rbrace}(p^{\mu\prime}) \Big) \\
&=& \frac{1}{2m}S(\Lambda)\Big( \lambda^{S}_{\lbrace +-\rbrace}(p^{\mu})\bar{\lambda}^{S}_{\lbrace +-\rbrace}(p^{\mu}) + \lambda^{S}_{\lbrace -+\rbrace}(p^{\mu})\bar{\lambda}^{S}_{\lbrace -+\rbrace}(p^{\mu}) 
\nonumber\\
&&- \lambda^{A}_{\lbrace +-\rbrace}(p^{\mu})\bar{\lambda}^{A}_{\lbrace +-\rbrace}(p^{\mu})-\lambda^{A}_{\lbrace -+\rbrace}(p^{\mu})\bar{\lambda}^{A}_{\lbrace -+\rbrace}(p^{\mu}) \Big)\gamma_0 S^{\dag}(\Lambda)\gamma_0,
\end{eqnarray*} and therefore 
\begin{eqnarray}
\Xi^{\prime}(p^{\prime\mu}) = S(\Lambda)\Xi(p^{\mu}) S^{-1}(\Lambda),
\end{eqnarray} 
as expected. Once verified the right transformations, we are able to evince the bilinear quantities. Starting from $\sigma$, we have
\begin{eqnarray*}
\sigma_{E}^{\prime} &=& \stackrel{\neg}{\lambda}\;^{\prime S/A}_{h}(p^{\mu\prime})\lambda^{\prime S/A}_{h}(p^{\mu\prime})\\
&=&\lambda^{\dag S/A}_h(p^{\mu}) S^{\dag}(\Lambda)S^{-1\dag}(\Lambda)\Xi^{\dag}(p^{\mu}) S^{\dag}(\Lambda)\gamma_0 S(\Lambda)\lambda^{S/A}_h(p^{\mu}) \\
&=&  \stackrel{\neg}{\lambda}\;^{S/A}_{h}(p^{\mu})\lambda^{S/A}_{h}(p^{\mu})\\
&=&  \sigma_{E},
\end{eqnarray*} 
implying $\sigma$ to be a scalar. Repeating the same procedure for the remaining bilinear forms, we obtain 
\begin{eqnarray*}
J^{\prime}_{\mu_{E}}&\rightarrow & \Lambda^{\nu}_{\mu}\lambdabp\gamma_{\nu}\lambdap, \quad \mbox{(Vector)},\\
\omega^{\prime}_{E} &\rightarrow &  det(\Lambda)i\lambdabp\Xi(p^{\mu})\gamma_5\Xi(p^{\mu})\lambdap, \quad  \mbox{(Scalar)},\\
K^{\prime}_{\mu_{E}}&\rightarrow & -det(\Lambda)\Lambda^{\nu}_{\;\;\;\mu}i\lambdabp\Xi(p^{\mu})\gamma_5\gamma_{\nu}\Xi(p^{\mu})\lambdap, \quad \mbox{(Vector)},\\
S^{\prime}_{\mu\nu_{E}} &\rightarrow & \frac{i}{2} \Lambda^{\alpha}_{\;\;\;\mu} \Lambda^{\beta}_{\;\;\;\nu}\lambdabp \Xi(p^{\mu})\gamma_{\alpha}\gamma_{\beta}\Xi(p^{\mu})\lambdap. \quad \mbox{(Bivector).}
\end{eqnarray*} 
Therefore, the nomenclature previously adopted is indeed adequate to the case at hands. We shall finalize pointing out that the investigation of covariance in which concerns $SIM(2)$, $HOM(2)$ Lorentz subgroups is taken into account in Appendix \ref{apendiceA}. The analysis is quite analogous, and the physical statements essentially hold the same.


\section{Looking towards some applications: The non-linear Heisenberg theory formalism}\label{secaoHeisenberg}
The non-linear Heisenberg equation of motion is easily obtained by varying the action with respect to the spinor field, constructed by \cite{heisenbergpaper,novello2}
\begin{eqnarray}
\mathcal{L} = \frac{i}{2}\bar{\psi}^{H}\gamma^{\mu}\partial_{\mu}\psi^{H} - \frac{i}{2}\partial_{\mu}\bar{\psi}^{H}\gamma^{\mu}\psi^{H}-s J_{\mu}J^{\mu},
\end{eqnarray}
where $J_{\mu}$ is defined in \eqref{covariantes}, but now replacing $\psi$ by $\psi^H$. Thus, non-linear Heisenberg equation reads\footnote{The fundamental field equations must be non-linear in order to represent interaction. The masses of the particles should be a consequence of this interaction \cite{heisenbergpaper1}.} \cite{novello}
\begin{eqnarray}\label{eqheisenberg}
i\gamma^{\mu}\partial_{\mu}\psi^{H} - 2s(A+iB\gamma^5)\psi^H=0,
\end{eqnarray}
where $s$ stands for a constant which has dimension of $(length)^2$ and the physical quantities $A$ and $B$ are given in terms of the usual bilinear covariants associated with the Heisenberg spinor, given by $A=\bar{\psi}^{H}\psi^{H}$ and $B=i\bar{\psi}^{H}\gamma_5\psi^{H}$, respectively.
The Heisenberg spinor can be represented by a line in a two-dimensional plane ($\pi$), where each axis is represented by the left-hand and right-hand spinors \cite{novello}. In such a way that, the Heisenberg spinor can be portrayed as the following identity
\begin{eqnarray}\label{heisenbergdecomposto}
\psi^H &=& \psi^{H}_{L}+ \psi^{H}_{R},
\end{eqnarray}
in other words,
\begin{eqnarray}
\psi^H=\frac{1}{2}(\mathbbm{1}+\gamma^5)\psi^H+\frac{1}{2}(\mathbbm{1}-\gamma^5)\psi^H.
\end{eqnarray}

A particular class of solutions of the Heisenberg equation \eqref{eqheisenberg} is given by  
\begin{eqnarray}\label{inomatacond}
\partial_{\mu}\psi = (aJ_{\mu}+bK_{\mu}\gamma^5)\psi,
\end{eqnarray}
with $a$, $b$ $\in \mathbb{C}$ having dimensionality $(length)^2$, and $J_{\mu}$ and $K_{\mu}$ being covariant and irrotational currents.
A solution of Eq. \eqref{inomatacond} also satisfies the Heisenberg equation of motion if $a$ and $b$ are such that $2s=i(a-b)$ \cite{novello} and shall be called as RIM (restricted Inomata-McKinley) spinor. As shown in \cite{RIM} every Dirac spinor written in terms of RIM spinors belongs to the class $1$ within Lounesto Classification. In order to Eq. \eqref{inomatacond} be integrable, the constants $a$ and $b$ must obey the constraint $Re(a)=Re(b)$.

Hence, we are able to define $J^2=J_{\mu}J^{\mu}$ and consequently 
\begin{eqnarray}
J_{\mu}=\partial_{\mu}S, 
\end{eqnarray}
where 
\begin{eqnarray}\label{eqS}
S=\frac{1}{(a+\bar{a})}\ln \sqrt{J^2},
\end{eqnarray}
represents a scalar, and similarly we can write 
\begin{eqnarray}\label{eqK}
K_{\mu}=\partial_{\mu}R,
\end{eqnarray}
with 
\begin{eqnarray}\label{eqR}
R=\frac{1}{(b-\bar{b})}\ln\bigg(\frac{A-iB}{\sqrt{J^2}}\bigg),
\end{eqnarray}
also being a scalar\footnote{In order to make the notation compact, we define $\sqrt{J^2}\equiv J$.}. 
From \eqref{inomatacond}, we obtain for the left-hand and right-hand Heisenberg spinors
\begin{eqnarray}
&&\partial_{\mu}\psi^{H}_{L} = (aJ_{\mu}+bK_{\mu})\psi^H_L, \\
&&\partial_{\mu}\psi^{H}_{R} = (aJ_{\mu}-bK_{\mu})\psi^H_R.
\end{eqnarray}
Thus, to complete the program we write an arbitrary spinor field, $\psi$, in terms of a decomposed Heisenberg spinor 
\begin{eqnarray}\label{psiarbitrario}
\psi=e^{F}\psi^H_L+e^{G}\psi^H_R,
\end{eqnarray}  
and then, looking towards to write a linear theory in terms of a non-linear theory, one analyses the properties encoded on the functions $F$ and $G$ in order to the spinor \eqref{psiarbitrario} to satisfy the Dirac equation. This is the prescription used in reference \cite{novello} to write Dirac spinors in terms of RIM-spinors. We will follow this idea in the next Section in order to also write Elko spinors in terms of RIM-spinors.

\section{Elko and RIM-spinors}\label{MDORIM}
Analogously as developed in \cite{novello}, we analyse the possibility to write an Elko fermionic field \cite{aaca} in terms of the non-linear Heisenberg spinors. All the discussion is based on two fundamental equations, the non-linear Heisenberg equation and the \emph{Dirac-like} equation for Elko fermions \cite{jcap}, which reads
\begin{eqnarray}\label{diraclike}
(i\gamma^{\mu}\partial_{\mu}\Xi(p^{\mu})\pm m\mathbbm{1})\lambda_{h}^{S/A}(\boldsymbol{x})=0, 
\end{eqnarray}
where the subscript $h$ stands for the helicity $h=\lbrace\pm,\mp\rbrace$, the upperindex $S/A$ stands for the self-conjugated and anti-self-conjugated spinors, respectively, under action of the charge conjugation operator ($\mathcal{C}\lambda_{h}^{S}= + \lambda_{h}^{S}$ and $\mathcal{C}\lambda_{h}^{A}= - \lambda_{h}^{A}$), and the operator $\Xi(p^{\mu})$ is given in \eqref{xioperator} \cite{bilineares}.
Then, we obtain the identity
\begin{eqnarray}
\lambda^{S/A}_h = \frac{1}{2}(\mathbbm{1}+\gamma^5)\lambda^{S/A}_{h}+\frac{1}{2}(\mathbbm{1}-\gamma^5)\lambda^{S/A}_{h}.
\end{eqnarray}
 One can explicit the left- and right-handed components as
\begin{eqnarray}
&&\lambda^{S/A}_{R_h}=\frac{1}{2}(\mathbbm{1}-\gamma_5)\lambda^{S/A}_{h}, \\
&&\lambda^{S/A}_{L_h}=\frac{1}{2}(\mathbbm{1}+\gamma_5)\lambda^{S/A}_{h}.
\end{eqnarray} 
We are now able to initiate the process to reach the decomposition (or representation) of the Elko spinors in terms of RIM-spinors. Firstly, one can write
\begin{eqnarray}
\lambda^{S/A}_{h} = e^{\stackrel{\neg}{F}}\psi_{L_{h^{\prime}}}^{H}+ e^{\stackrel{\neg}{G}}\psi_{R_{h^{\prime}}}^{H},
\end{eqnarray}
and, consequently, for the left- and right-handed components, we obtain
\begin{eqnarray}
&&\lambda^{S/A}_{L_h}= e^{\stackrel{\neg}{F}}\psi_{L_{h^{\prime}}}^{H},\\
&&\lambda^{S/A}_{R_h}= e^{\stackrel{\neg}{G}}\psi_{R_{h^{\prime}}}^{H}. 
\end{eqnarray}
The symbol ``\;$\stackrel{\neg}{}$\;" over $F$ and $G$, although commonly used to represent the dual of $\lambda$, is here simply to denote the functions related to $\lambda$ in the attempt to RIM-decompose such a spinor, and do not have any relation to the dual structure.
From the algebraic point of view, the sum of the spinors of certain classes did not prescribe the class, that is to say the resulting spinor does not necessarily belongs to the same class of spinors of which it is composed.

Following the program, the next step is to find the explicit form of $\stackrel{\neg}{F}$ and $\stackrel{\neg}{G}$ in order that $\lambda^{S/A}_{h}$ satisfies \eqref{diraclike}. By the same akin reasoning presented in \cite{novello} but now for the Elko spinors, we note that 
\begin{eqnarray}\label{chain}
\partial_{\mu} = \partial_{\mu}S\frac{\partial}{\partial S} + \partial_{\mu}R\frac{\partial}{\partial R}.
\end{eqnarray}
Taking into account the relations in equations \eqref{eqS} and \eqref{eqR}, we are able to write \eqref{chain} in this fashion
\begin{eqnarray}
\partial_{\mu}= J_{\mu}\frac{\partial}{\partial S} + K_{\mu}\frac{\partial}{\partial R},
\end{eqnarray}
therefore, one obtains
\begin{eqnarray}\label{elkomaoesq}
\partial_{\mu} \lambda^{S/A}_{L_h} = \bigg(\frac{\partial \stackrel{\neg}{F}}{\partial S}J_{\mu}+ \frac{\partial \stackrel{\neg}{F}}{\partial R}K_{\mu}\bigg)\lambda^{S/A}_{L_h}+(aJ_{\mu}+bK_{\mu})\lambda^{S/A}_{L_h}, 
\end{eqnarray}
and
\begin{eqnarray}\label{elkomaodir}
\partial_{\mu} \lambda^{S/A}_{R_h} = \bigg(\frac{\partial \stackrel{\neg}{G}}{\partial S}J_{\mu}+ \frac{\partial \stackrel{\neg}{G}}{\partial R}K_{\mu}\bigg)\lambda^{S/A}_{R_h}+(aJ_{\mu}-bK_{\mu})\lambda^{S/A}_{R_h}.
\end{eqnarray}
Taking advantage of the \emph{Dirac-like} equation, we multiply the equations \eqref{elkomaoesq} and \eqref{elkomaodir} by $i\gamma^{\mu}$, then, using the fact that $\Xi^2(p^{\mu})=\mathbbm{1}$ and $[\Xi(p^{\mu}),\gamma^{\mu}p_{\mu}]=0$, so we have\footnote{The authors choose to work in abstract only with the $\lambda^{S}_{h}$ spinors since the physical content holds the same for all the other Elko spinors, one differing from the other only by a constant phase.} 
\begin{eqnarray}
i\gamma^{\mu}\partial_{\mu}\lambda^{S}_{h} &=& i(A-iB)\bigg(\frac{\partial \stackrel{\neg}{F}}{\partial S}- \frac{\partial \stackrel{\neg}{F}}{\partial R} +(a-b)\bigg)\lambda^{S}_{R_h}+ i(A+iB)\bigg(\frac{\partial \stackrel{\neg}{G}}{\partial S}- \frac{\partial \stackrel{\neg}{G}}{\partial R} +(a-b)\bigg)\lambda^{S}_{L_h}\nonumber\\
&=&m\Xi(p^{\mu})\lambda^{S}_{h}.
\end{eqnarray}
Using the relations \eqref{A5}-\eqref{A8}, one obtains the following set of equations:
\begin{eqnarray}
&&\bigg[(A-iB)\bigg(\frac{\partial \stackrel{\neg}{F}}{\partial S}- \frac{\partial \stackrel{\neg}{F}}{\partial R} +(a-b)\bigg)\mathbbm{1}+im\Xi_1(p^{\mu})\bigg]\lambda^{S}_{R_h}=0, \\
&&\bigg[(A+iB)\bigg(\frac{\partial \stackrel{\neg}{G}}{\partial S}+ \frac{\partial \stackrel{\neg}{G}}{\partial R} +(a-b)\bigg)\mathbbm{1}+im\Xi_2(p^{\mu})\bigg]\lambda^{S}_{L_h}=0.
\end{eqnarray}
At this stage, we freely summarized the notation and rewrite \eqref{xioperator} as it follows
\begin{eqnarray}
\Xi(p^{\mu})= \left( \begin{array}{cc}
\Xi_1(p^{\mu})  & 0_{2\times 2} \\ 
0_{2\times 2} & \Xi_2(p^{\mu}) 
\end{array} \right).
\end{eqnarray}
After a bit of straightforward calculation, the solutions for $\stackrel{\neg}{F}(S,R)$ and $\stackrel{\neg}{G}(S,R)$ functions are given by 
\begin{eqnarray}
 \stackrel{\neg}{F}_{\pm}(S,R) & \equiv & -2isR \pm \frac{p\sin{\theta} (A+iB) e^{-2(a+\bar{a})S}}{2(a+\bar{a})},\\
 \stackrel{\neg}{G}_{\pm}(S,R) & \equiv & +2isR \pm \frac{p\sin{\theta} (A-iB) e^{-2(a+\bar{a})S}}{2(a+\bar{a})}.
\end{eqnarray}
Note that 
\begin{equation}
 A+iB = \frac{J^2}{A-iB},
\end{equation}
and from \eqref{eqS}, we have
\begin{equation}
J^{2} = e^{2(a+\bar{a})S}.
\end{equation}
Therefore,
\begin{eqnarray}
 e^{\stackrel{\neg}{F}_{-}} &=& \exp{\left[ -2isR - \frac{1}{2}\frac{p\sin{\theta}}{(a+\bar{a})(A-iB)} \right] },\\
 e^{\stackrel{\neg}{G}_{-}} &=& \exp{\left[ +2isR - \frac{1}{2}\frac{p\sin{\theta}}{(a+\bar{a})(A+iB)} \right] }.
\end{eqnarray}
By defining $\vartheta \equiv e^{2isR}$, we have
\begin{eqnarray}
 e^{\stackrel{\neg}{F}_{-}} &=& \frac{1}{\vartheta} \exp{\left[ - \frac{1}{2}\frac{p\sin{\theta}}{(a+\bar{a})(A-iB)} \right]},\\
 e^{\stackrel{\neg}{G}_{-}} &=& \vartheta \exp{\left[ - \frac{1}{2}\frac{p\sin{\theta}}{(a+\bar{a})(A+iB)} \right] }.
\end{eqnarray}
Following an analogue prescription, we can write
\begin{eqnarray}
 e^{\stackrel{\neg}{F}_{\pm}} & = & \frac{1}{\vartheta} \exp{\left[ \pm \frac{1}{2}\frac{p\sin{\theta}}{(a+\bar{a})(A-iB)} \right] },\\
 e^{\stackrel{\neg}{G}_{\pm}} & = & \vartheta \exp{\left[ \pm \frac{1}{2}\frac{p\sin{\theta}}{(a+\bar{a})(A+iB)} \right] }.
\end{eqnarray}
In this manner, we finally write the Elko spinors in terms of RIM-spinors:
\begin{equation}
 \lambda_h = \frac{1}{\vartheta} \exp{\left[ \pm \frac{p\sin{\theta}}{2(a+\bar{a})(A-iB)} \right]} \psi^H_{L_h} + \vartheta \exp{\left[ \pm \frac{p\sin{\theta}}{2(a+\bar{a})(A+iB)} \right] }\psi^H_{R_h},
\end{equation}
or, one is able to write the last expression in the fashion (replacing $p$ to $m$)
\begin{eqnarray}\label{elkorim}
\lambda = \bigg(\sqrt{\frac{J}{A-iB}}\bigg)^{\rho} \exp{\left[ \pm \frac{m\sin{\theta}}{4Re(a)(A-iB)} \right]} \psi^H_L + \bigg(\sqrt{\frac{A-iB}{J}}\bigg)^{\rho} \exp{\left[ \pm \frac{m\sin{\theta}}{4Re(a)(A+iB)} \right] }\psi^H_R,\nonumber\\
\end{eqnarray}
where we have defined 
\begin{eqnarray}
\rho & \equiv & \frac{Im(a)-Im(b)}{Im(b)} \nonumber\\ 
&=& \frac{-2s}{\text{Im}(b)}.
\end{eqnarray}
Note that we omitted the upper index $S/A$ due to the fact that such spinors differs from a global phase. As a net result we reach that the Elko fields can be freely represented as a combination of RIM-spinors which satisfy the non-linear Heisenberg equation.

\subsection{Two-dimensional spinor-spaces: the spinor-plane}\label{spinorplane}

We start this Section giving the definition \cite{restrictedinomata} of the spaces in which we will work on.

\begin{definition}\label{def1}
 We denote by $\Pi^H$ the two-dimensional space whose the set $\mathcal{B} = \{ \Psi^H_L, \Psi^H_R \}$ (namely, the left- and right-handed components of the RIM-spinor $\Psi^H$) forms a basis. Analogously, we denote the spaces $\Pi^D$ (with basis $\mathcal{D} = \{\Psi^D_L,\Psi^D_R\}$ being formed by the components of the Dirac-RIM spinor) and $\Pi^M$ (with basis formed by the Elko-RIM components $\mathcal{M} = \{\lambda_L,\lambda_R\}$). These spaces will be called spinor-planes.
\end{definition}

In order to achieve better organization, let us record that we can write Dirac spinors \cite{novello} $\Psi^D$ and Elko spinors $\lambda$ in the $\Pi^H$ space, via basis $\mathcal{B}$, as \cite{restrictedinomata}
\begin{eqnarray}
 \Psi^D & = & \exp{\left[\frac{iM}{(a+\bar{a})J}\right]} J^{2\sigma} \left(\sqrt{\frac{J}{A-iB}}\Psi^H_L + \sqrt{\frac{A-iB}{J}}\Psi^H_R \right), \\
 \lambda & = & \exp{\left[\frac{\pm m \sin{\theta}}{4\text{Re}(a)(A-iB)}\right]} \left(\sqrt{\frac{J}{A-iB}}\right)^{\rho} \Psi^H_L + \exp{\left[\frac{\pm m \sin{\theta}}{4\text{Re}(a)(A+iB)}\right]} \left(\sqrt{\frac{A-iB}{J}}\right)^{\rho} \Psi^H_R, \nonumber \\
\end{eqnarray}
\noindent with $J^{2\sigma} = \exp{\{ \left[2is - \frac{1}{2}(b-\bar{b})\right] S\}} = \exp{\left[ -i\frac{\text{Im}(a)}{2\text{Re}(a)}\ln{J} \right]}$.\footnote{With the symbols $\sigma$, $\omega$ and $\alpha$ with no association with the same symbols used
elsewhere in this work.} Now we will set the following notations for these complex numbers, for the sake of clarity:
\begin{eqnarray}
 \alpha & \equiv & \exp{\left[\frac{iM}{(a+\bar{a})J}\right]}, \\
 \beta & \equiv & J^{2\sigma}, \\
 \delta & \equiv & \sqrt{\frac{J}{A-iB}}, \\
 \epsilon & \equiv & \left(\sqrt{\frac{J}{A-iB}}\right)^\rho, \\
 \omega & \equiv & \exp{\left[\frac{\pm m \sin{\theta}}{4\text{Re}(a)(A-iB)}\right]},\\
 \zeta & \equiv & \exp{\left[\frac{\pm m \sin{\theta}}{4\text{Re}(a)(A+iB)}\right]}.
\end{eqnarray}

In this fashion, one can denote the left- and right-handed components of the fields as
\begin{eqnarray}
 \Psi^D_L & = & \alpha \beta \delta \Psi^H_L,\\
 \Psi^D_R & = & \alpha \beta \delta^{-1} \Psi^H_R,\\
 \lambda_L & = & \epsilon \omega \Psi^H_L,\\
 \lambda_R & = & \epsilon^{-1} \zeta \Psi^H_R,
\end{eqnarray}

which leads to

\begin{eqnarray}
 \lambda_L = \chi_1 \Psi^D_L,\\
 \lambda_R = \chi_2 \Psi^D_R,\\
 \Psi^D_L = \chi_1^{-1} \lambda_L,\\
 \Psi^D_R = \chi_2^{-1} \lambda_R,
\end{eqnarray}

\noindent with the coefficients defined as $\chi_1 \equiv \epsilon \omega \delta^{-1} \beta^{-1} \alpha^{-1}$ and $\chi_2 \equiv \epsilon^{-1} \zeta \delta \beta^{-1} \alpha^{-1}$ being obviously invertible. These coefficients and their inverses are the tools that map Dirac-RIM spinors into Elko-RIM spinors and vice-versa. After some straightforward calculations, one achieves an explicit form of those complex coefficients as


\begin{eqnarray}
 \chi_1 & = & \left(\sqrt{\frac{J}{A-iB}}\right)^{\rho-1} \exp{ \left\{ \frac{1}{2\text{Re}(a)} \left[ \pm \frac{m \sin{\theta}}{2(A-iB)} -i\left( \text{Im}(a)\ln{J} +\frac{M}{J} \right) \right] \right\}}, \nonumber\\ \\
 \chi_1^{-1} & = & \left(\sqrt{\frac{A-iB}{J}}\right)^{\rho-1} \exp{ \left\{ \frac{1}{2\text{Re}(a)} \left[ \mp \frac{m \sin{\theta}}{2(A-iB)} +i\left( \text{Im}(a)\ln{J} +\frac{M}{J} \right) \right] \right\}},\nonumber\\ \\
 \chi_2 & = & \left(\sqrt{\frac{A-iB}{J}}\right)^{\rho-1} \exp{ \left\{ \frac{1}{2\text{Re}(a)} \left[ \pm \frac{m \sin{\theta}}{2(A+iB)} -i\left( \text{Im}(a)\ln{J} +\frac{M}{J} \right) \right] \right\}}, \nonumber\\ \\
 \chi_2^{-1} & = & \left(\sqrt{\frac{J}{A-iB}}\right)^{\rho-1} \exp{ \left\{ \frac{1}{2\text{Re}(a)} \left[ \mp \frac{m \sin{\theta}}{2(A+iB)} +i\left( \text{Im}(a)\ln{J} +\frac{M}{J} \right) \right] \right\}}.\nonumber\\
\end{eqnarray}

This way, one can obtain
\begin{eqnarray}
 \lambda & = & \frac{1}{2}\left[ \chi_1 (\mathbbm{1}+\gamma^5) + \chi_2(\mathbbm{1}-\gamma^5) \right] \Psi^D,\\
 \Psi^D & = & \frac{1}{2}\left[ \chi_1^{-1} (\mathbbm{1}+\gamma^5) + \chi_2^{-1} (\mathbbm{1}-\gamma^5) \right] \lambda.
\end{eqnarray}

If we define the matrices 
\begin{equation}
M \equiv \frac{1}{2}\left[ \chi_1 (\mathbbm{1}+\gamma^5) + \chi_2(\mathbbm{1}-\gamma^5) \right],
\end{equation}
and 
\begin{equation}
N \equiv \frac{1}{2}\left[ \chi_1^{-1} (\mathbbm{1}+\gamma^5) + \chi_2^{-1} (\mathbbm{1}-\gamma^5) \right],
\end{equation}
it easily verifies that $MN = NM = \mathbbm{1}$, i.e., $N = M^{-1}$. Then, we have just proved the following:

\begin{lemma}\label{DiracMDO}
 Let $\varphi_D \in \Pi^D$ and $\varphi_\lambda \in \Pi^M$. There exists a linear isomorphism $M:\Pi^D \rightarrow \Pi^M$, given by means of a matricial operator $M = \frac{1}{2}\left[ \chi_1 (\mathbb{I}+\gamma^5) + \chi_2(\mathbbm{1}-\gamma^5) \right]$, such that
 
\begin{eqnarray}
 \varphi_\lambda & = & M \varphi_D,\\
 \varphi_D & = & M^{-1} \varphi_\lambda.
\end{eqnarray}
 
\end{lemma}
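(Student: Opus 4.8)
The plan is to prove Lemma~\ref{DiracMDO} by exhibiting the map $M$ explicitly and verifying that it is a well-defined linear isomorphism between the spinor-planes $\Pi^D$ and $\Pi^M$. Since both spaces are two-dimensional with distinguished bases $\mathcal{D} = \{\Psi^D_L,\Psi^D_R\}$ and $\mathcal{M} = \{\lambda_L,\lambda_R\}$, and since these bases arise from the \emph{same} underlying RIM-basis $\mathcal{B} = \{\Psi^H_L,\Psi^H_R\}$ of $\Pi^H$ via the scalar factors computed above, the natural strategy is: first record the componentwise relations $\lambda_L = \chi_1\,\Psi^D_L$ and $\lambda_R = \chi_2\,\Psi^D_R$ that were already derived from the chains $\Psi^D_L = \alpha\beta\delta\,\Psi^H_L$, $\lambda_L = \epsilon\omega\,\Psi^H_L$ (and similarly for the right-handed pieces); then observe that the chiral projectors $\tfrac12(\mathbbm{1}\pm\gamma^5)$ pick out exactly the $L$ and $R$ components, so that acting on a general $\varphi_D = \varphi_{D,L} + \varphi_{D,R} \in \Pi^D$ the operator $M = \tfrac12[\chi_1(\mathbbm{1}+\gamma^5) + \chi_2(\mathbbm{1}-\gamma^5)]$ sends $\varphi_{D,L} \mapsto \chi_1\varphi_{D,L} = $ the corresponding $\lambda_L$-type component and $\varphi_{D,R}\mapsto\chi_2\varphi_{D,R}$, i.e. $M\varphi_D = \varphi_\lambda \in \Pi^M$.

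Next I would establish invertibility. Here the key point is that $\chi_1$ and $\chi_2$ are nonzero complex scalars — each is a product of exponentials and of powers $\big(\sqrt{J/(A-iB)}\big)^{\pm(\rho-1)}$, all of which are manifestly nonvanishing (one should note $J \neq 0$ and $A - iB \neq 0$ on the support where the RIM-decomposition is valid, which is guaranteed by the earlier construction, cf.\ \eqref{eqS} and \eqref{eqR} where $\ln\sqrt{J^2}$ and $\ln\big((A-iB)/\sqrt{J^2}\big)$ must be defined). Hence the inverses $\chi_1^{-1}$, $\chi_2^{-1}$ exist, and I would define $N = \tfrac12[\chi_1^{-1}(\mathbbm{1}+\gamma^5) + \chi_2^{-1}(\mathbbm{1}-\gamma^5)]$ and check $MN = NM = \mathbbm{1}$ directly: using the orthogonal idempotent identities $\big(\tfrac12(\mathbbm{1}+\gamma^5)\big)^2 = \tfrac12(\mathbbm{1}+\gamma^5)$, $\big(\tfrac12(\mathbbm{1}-\gamma^5)\big)^2 = \tfrac12(\mathbbm{1}-\gamma^5)$, and $\tfrac12(\mathbbm{1}+\gamma^5)\cdot\tfrac12(\mathbbm{1}-\gamma^5) = 0$ (which follow from $(\gamma^5)^2 = \mathbbm{1}$), the cross terms drop out and the product collapses to $\tfrac12[\chi_1\chi_1^{-1}(\mathbbm{1}+\gamma^5) + \chi_2\chi_2^{-1}(\mathbbm{1}-\gamma^5)] = \tfrac12[(\mathbbm{1}+\gamma^5)+(\mathbbm{1}-\gamma^5)] = \mathbbm{1}$. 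This is the routine part and I would present it compactly.

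Linearity of $M$ is immediate since $M$ is a fixed matricial operator (the scalars $\chi_1,\chi_2$ depend on the spacetime point / kinematics but not on the spinor being acted upon), so $M(c_1\varphi_D + c_2\varphi_D') = c_1 M\varphi_D + c_2 M\varphi_D'$. Combined with bijectivity from the explicit two-sided inverse $N = M^{-1}$, this gives the linear isomorphism $M:\Pi^D \to \Pi^M$ with $\varphi_\lambda = M\varphi_D$ and $\varphi_D = M^{-1}\varphi_\lambda$, as claimed.

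The main obstacle, such as it is, is not algebraic but conceptual: one must be careful that $M$ genuinely maps \emph{into} $\Pi^M$ and not merely into the ambient $\mathbb{C}^4$ — that is, that $\chi_1\Psi^D_L$ and $\chi_2\Psi^D_R$ really are (scalar multiples of) the Elko-RIM components $\lambda_L,\lambda_R$ spanning $\Pi^M$. This is exactly what the componentwise identities $\lambda_L = \chi_1\Psi^D_L$, $\lambda_R = \chi_2\Psi^D_R$ (obtained by eliminating the common factors $\epsilon\omega$ versus $\alpha\beta\delta$, etc.) encode, so the burden is really just to cite those relations cleanly and note that $\{\lambda_L,\lambda_R\}$ is by Definition~\ref{def1} a basis of $\Pi^M$; everything else is the idempotent-projector bookkeeping above. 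I would also remark in passing that the same argument, transported through $\Pi^H$, yields the analogous isomorphisms $\Pi^H \cong \Pi^D$ and $\Pi^H \cong \Pi^M$, which situates Lemma~\ref{DiracMDO} as one leg of a triangle of spinor-plane isomorphisms.
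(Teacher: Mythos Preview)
Your proposal is correct and follows essentially the same approach as the paper: record the componentwise relations $\lambda_L=\chi_1\Psi^D_L$, $\lambda_R=\chi_2\Psi^D_R$, package them via the chiral projectors into the diagonal operator $M$, and check $MN=NM=\mathbbm{1}$ with $N$ built from $\chi_1^{-1},\chi_2^{-1}$. Your treatment is in fact more explicit than the paper's, which merely asserts that the $\chi_i$ are ``obviously invertible'' and that ``it easily verifies that $MN=NM=\mathbbm{1}$'', whereas you spell out the nonvanishing of the $\chi_i$ and the projector-idempotent computation.
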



\textit{Lemma} \ref{DiracMDO} shows a linear bijective (algebraic) map between special classes of Elko and Dirac fields, when both are decomposable in terms of RIM-spinors.

Note that an analogue procedure can be done between all the other combinations of the spinor-spaces. Thus, using $(v,w)_\mathcal{A}$ as a notation for the coordinates of a given spinor in a basis $\mathcal{A}$ of a spinor-space $\Pi^\mathcal{A}$, for $\mathcal{A} \in \{ \mathcal{B,D,M} \}$, one can represent $\Psi^H$, $\Psi^D$ and $\lambda$ as

\begin{alignat}{3}
 \Psi^H & = (1,1)_\mathcal{B} &{} = {}& (\alpha^{-1} \beta^{-1} \delta^{-1}, \alpha^{-1} \beta^{-1} \delta)_\mathcal{D} &{} = {}& (\epsilon^{-1} \omega^{-1}, \epsilon \zeta^{-1})_\mathcal{M},\\
 \Psi^D & = (\alpha \beta \delta,\alpha \beta \delta^{-1})_\mathcal{B} &{} = {}& (1,1)_\mathcal{D} &{} = {}& (\chi_1^{-1},\chi_2^{-1})_\mathcal{M},\\
 \lambda & = (\epsilon \omega, \epsilon^{-1} \zeta)_\mathcal{B} &{} = {}& (\chi_1, \chi_2)_\mathcal{D} &{} = {}& (1,1)_\mathcal{M}.
\end{alignat}

Precisely, the construction of the (invertible) operators $L:\Pi^H \rightarrow \Pi^D$ and $Q:\Pi^H \rightarrow \Pi^M$ leads to matricial representations given by

\begin{eqnarray}
 \label{matrizL} L & = & \frac{1}{2}\left[ (\alpha \beta \delta) (\mathbbm{1}+\gamma^5) + (\alpha \beta \delta^{-1}) (\mathbbm{1}-\gamma^5) \right],\\
 \label{matrizQ} Q & = & \frac{1}{2}\left[ (\epsilon \omega) (\mathbbm{1}+\gamma^5) + (\omega^{-1} \zeta) (\mathbbm{1}-\gamma^5) \right],
\end{eqnarray}

\noindent such that 

\begin{eqnarray}
 \label{loke1} \Psi^D & = & L \Psi^H,\\
 \label{loke2} \Psi^H & = & L^{-1} \Psi^D,\\
 \label{loke3} \lambda & = & Q \Psi^H,\\
 \label{loke4} \Psi^H & = & Q^{-1} \lambda.
\end{eqnarray}
 
Then, we can state the following:

\begin{lemma}\label{isos}
 Suppose the existence of a spinor-plane $\Pi^S$ with basis formed by left- and right-handed components of a given spinor $\psi = \psi_L + \psi_R$. If $\psi$ can be decomposed in terms of at least one of $\Psi^H$, $\Psi^D$ or $\lambda$ components with both coefficients non vanishing (in other words, the decomposition is invertible), then it can be written in terms of any of those spinors, i.e., $\Pi^S \cong \Pi^H \cong \Pi^D \cong \Pi^M$.
\end{lemma}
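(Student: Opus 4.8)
The plan is to prove Lemma \ref{isos} by showing that the existence of \emph{one} invertible decomposition of $\psi$ (in terms of the components of $\Psi^H$, $\Psi^D$ or $\lambda$) is enough to produce invertible maps to \emph{all} the spinor-planes, using the already-established isomorphisms $L:\Pi^H\to\Pi^D$ and $Q:\Pi^H\to\Pi^M$ together with the composition $M = Q L^{-1}:\Pi^D\to\Pi^M$ from Lemma \ref{DiracMDO}. First I would observe that ``$\cong$'' here means linear isomorphism of two-dimensional complex spaces, and that the content of equations \eqref{loke1}--\eqref{loke4} together with Lemma \ref{DiracMDO} is precisely $\Pi^H\cong\Pi^D\cong\Pi^M$; so it remains only to attach the new plane $\Pi^S$ to this chain.

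The key steps, in order: (1) Suppose first that $\psi=\psi_L+\psi_R$ is invertibly decomposable in terms of the RIM components, i.e.\ $\psi_L = c_1\Psi^H_L$ and $\psi_R = c_2\Psi^H_R$ with $c_1,c_2\neq 0$ (the chiral projectors $\tfrac12(\mathbbm 1\pm\gamma^5)$ guarantee the handed components map to handed components, as throughout Section \ref{MDORIM}). Then define $R\equiv\tfrac12\left[c_1(\mathbbm 1+\gamma^5)+c_2(\mathbbm 1-\gamma^5)\right]$; exactly as in the verification $MN=NM=\mathbbm 1$ preceding Lemma \ref{DiracMDO}, one checks $R$ is invertible with $R^{-1}=\tfrac12\left[c_1^{-1}(\mathbbm 1+\gamma^5)+c_2^{-1}(\mathbbm 1-\gamma^5)\right]$, using $(\mathbbm 1\pm\gamma^5)^2 = 2(\mathbbm 1\pm\gamma^5)$ and $(\mathbbm 1+\gamma^5)(\mathbbm 1-\gamma^5)=0$. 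This gives $\Pi^S\cong\Pi^H$ via $\psi = R\,\Psi^H$. (2) If instead $\psi$ is invertibly decomposable in terms of $\Psi^D$ components, build the analogous invertible $R'$ with $\psi = R'\Psi^D$, and compose: $\psi = R' L\,\Psi^H$, so $\Pi^S\cong\Pi^H$ again (and $L$, $R'$ invertible $\Rightarrow R'L$ invertible). Symmetrically, if $\psi$ is decomposable in terms of $\lambda$ components, use $Q$ in place of $L$. (3) In every case we have produced an explicit invertible operator from $\Pi^S$ to $\Pi^H$; composing with $L$, $Q$, and $M=QL^{-1}$ (all invertible by the preceding lemmas) yields invertible maps $\Pi^S\to\Pi^D$ and $\Pi^S\to\Pi^M$, establishing $\Pi^S\cong\Pi^H\cong\Pi^D\cong\Pi^M$.

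The only genuine obstacle — and it is mild — is making precise the hypothesis ``$\psi$ can be decomposed in terms of $\ldots$ components with both coefficients non vanishing'': one must argue that such a decomposition necessarily has the \emph{diagonal} form $\psi_L = c_1\Psi^H_L$, $\psi_R=c_2\Psi^H_R$ (rather than mixing handedness), which follows because $\gamma^5$ acts as $\pm 1$ on the left/right subspaces and $\Psi^H_L,\Psi^H_R$ span the respective eigenspaces, so the chiral projection of the decomposition forces the coefficients to line up. Once this is granted, everything reduces to the elementary fact that a composition of linear isomorphisms is a linear isomorphism, and the explicit operators $L$, $Q$, $M$ built in \eqref{matrizL}, \eqref{matrizQ} and Lemma \ref{DiracMDO} do the rest. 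I would close by remarking that the transitivity of ``$\cong$'' makes the choice of which of $\Psi^H$, $\Psi^D$, $\lambda$ appears in the hypothesis immaterial, which is exactly the assertion of the lemma.
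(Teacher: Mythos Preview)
Your proof is correct and follows exactly the same approach as the paper, which simply states the result is ``trivial, using the results of \textit{Lemma} \ref{DiracMDO} and Equations (\ref{loke1}--\ref{loke4}).'' Your version is more explicit---you actually construct the diagonal operator $R$ and verify the chirality argument---but the underlying idea (compose the hypothesized invertible decomposition with the already-built isomorphisms $L$, $Q$, $M$) is identical.
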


\begin{proof}
 It is trivial, using the results of \textit{Lemma} \ref{DiracMDO} and Equations (\ref{loke1} - \ref{loke4}). 
\end{proof}

Note that \textit{Lemma} \ref{DiracMDO} is a corollary of \textit{Lemma} \ref{isos}.

Another fact that is worthwhile to mention is that $M$, $Q$ and $L$ as shown in \textit{Lemma} \ref{DiracMDO} and Equations (\ref{matrizL}) and (\ref{matrizQ}) are all diagonal (as, obviously, their inverses). This is because of the nature of the chirality projector operators, and we can define:

\begin{definition}
 We define $\mathfrak{M}$ as being the space of all matricial operators $R$ such that $\psi = R \varphi$, with $\psi, \varphi$ being spinors that may be decomposed in terms of RIM-spinors. The space $\mathfrak{M}$ has the set of projector operators $\left\{ \frac{1}{2}(\mathbbm{1}+\gamma^5), \frac{1}{2}(\mathbbm{1}-\gamma^5) \right\}$ as basis, working with complex coefficients to form elements of $\mathfrak{M}$, i.e.,
 
\begin{equation}
 \forall R \in \mathfrak{M}, \exists c_1, c_2 \in \mathbb{C} : R = c_1 \frac{1}{2}(\mathbbm{1}+\gamma^5) + c_2 \frac{1}{2}(\mathbbm{1}-\gamma^5).
\end{equation}
 
 Explicitly, $R = \text{diag}(c_2, c_2, c_1, c_1)$.
 
\end{definition}

It should be clear that, when $c_1,c_2 \neq 0$, every $R \in \mathfrak{M}$ is invertible, with
\begin{equation}
\text{diag}(c_2^{-1}, c_2^{-1}, c_1^{-1}, c_1^{-1}) = R^{-1} \in \mathfrak{M}.
\end{equation}
Finally, given the aspect of all those spinor-planes, we can understand them as being, in fact, exactly the same space, with the matrices $M,L,Q$ and their inverses being change-of-basis matrix operators between the basis $\mathcal{B}$, $\mathcal{D}$ and $\mathcal{M}$, with this being valid for every matrix $R \in \mathfrak{M}$ with other basis of the spinor-plane. This way, we can understand the space $\mathfrak{M}$ as being the space of all change-of-basis matrix operators in the spinor-plane. Then, the spinor-plane is a two-dimensional space of all spinors that may be decomposed in terms of RIM-spinors (given its left- and right-handed components to form a basis on this space), equipped with a space of change-of-basis matrix operators, which encompasses Elko spinors. For more discussions and results concerning the spinor-plane way to approach the study of spinors, please check \cite{restrictedinomata}.

\section{Final Remarks}

In the present communication we have shown the emergence of the Elko spinors - the first concrete example describing mass dimension one fermions with dual helicity feature expected to be a strong candidate to describe dark mater. Interesting enough, we highlight that in its first formulation, we face a non-local theory due to a term, encoded on the spin sums, which are not manifestly invariant or covariant under Lorentz transformations. Such a feature was first believed to be an intrinsic characteristic of dark matter. Nonetheless, non-locality usually pose problems to the physical interpretation, thus, it must to be circumvented. After a while, a mathematical apparatus was developed giving all the necessary support to redefine (or reconstruct) the Elko's dual structure, now, leading to a local and Lorentz invariant theory. As already mentioned, in Ref. \cite{aaca} it is described a subtle way to evade Weinberg's no-go theorem, by exploring another possibility of the dual structure, this time constructing the dual with the additional requirement of Lorentz invariant spin sums. Additional mathematical support was given in \cite{vicinity}. 

Now, regarding to the physical information coming from the bilinear forms, as one can see, we dealt with the necessity to deform the usual Clifford's algebra in order to ascertain the right observance of the Fierz-Pauli-Kofink identities, concerning Elko spinor. Given mathematical programme undergoes on a deformation of the basis of the usual Clifford's algebra looking towards provide new/deformed bilinear forms, taking into account the \emph{Dirac normalisation} method, leading to spinorial densities which hold the FPK identities. As a matter of fact, it must be clear to the reader that Elko can not be fitted or even understood as an object belonging to the Lounesto's classification, a series of reasons are detailed discussed in \cite{beyondlounesto}. Going further, as a net result, only a subset of bilinear covariants experienced are real, bringing some difficulties to the physical interpretation. And in the lights of \cite{beyondlounesto}, thus, we just have a hint towards its possible interpretation.  

After show all the main aspects concerning the Elko construction and also its subtleties, we shall finalize making an allusive comment to one of several applications of Elko spinors. In particular, what we have developed in the so-called spinor-plane is to accommodate a bijective linear map between special classes, i.e., both being RIM-decomposable and, therefore, non exotic, of Elko and Dirac spinors. This mapping is quite natural, as it uses RIM-spinors as a fundamental element making the mediation between Dirac and Elko fields. Although this mapping brings some constraints imposed in the fields themselves, for instance, they have to be RIM-decomposable, one does not have to work with the bilinear covariants, which is often a hard situation to deal with when we study Elko spinors, since they do not necessarily fit in the usual Lounesto's classification. Therefore, the mapping developed in \cite{restrictedinomata} transcends the problem of Lounesto's classification of Elko spinors.
\\\\
R. J. Bueno Rogerio was supported by Conselho Nacional de Desenvolvimento Científico e Tecnológico Grant Number 155675/2018-4, C. H. Coronado Villalobos was supported by Conselho Nacional de Desenvolvimento Científico e Tecnológico Grant Number PCI 300236/2019-0. ARA was partially supported by CAPES.

\section*{Author contribution statement}

\noindent The authors declare that each of the four authors equally contributed to both the
scientific contents and writing of this manuscript.

\appendix

\section{Beyond The Standard Model: Elko from the VSR perspective}\label{apendiceA}

As extensively shown in \cite{horvath} and \cite{ahluwaliahorvath}, Elko spinors can be understood as objects carrying a linear representation of SIM(2) or HOM(2) Lorentz subgroups. In this vein, it is important to explore a possible Clifford algebra basis deformation in this case, if necessary obviously. 

In order to illustrate the situation, we shall make use of the Elko spinors found in \cite{ahluwaliahorvath}. Notice that, in general, the helicity operator does not commute with the VSR boost. Therefore, one cannot freely choose the rest spinors as a basis for such an operator. This is the kernel of the subtle difference between the Elko spinors studied along the main text and the spinors here investigated. 

For an arbitrary momentum, it is possible to see that in the VSR scope we have \cite{ahluwaliahorvath}
\begin{eqnarray}\label{vixi}
\chi^S_{\{-,+\}}(p^{\mu}) = \sqrt{m}\left( \begin{array}{c}
i\frac{p_x-ip_y}{\sqrt{m(p_0-p_z)}}e^{i\phi/2} \\ 
i\sqrt{\frac{p_0-p_z}{m}}e^{i\phi/2} \\ 
\sqrt{\frac{p_0-p_z}{m}}e^{-i\phi/2} \\ 
-\frac{p_x+ip_y}{\sqrt{m(p_0-p_z)}}e^{-i\phi/2}
\end{array} \right),
\end{eqnarray} as a prototype spinor and 
\begin{eqnarray}
\stackrel{\neg}{\chi}^S_{\{-,+\}}(p^{\mu}) = \sqrt{m}\left(\begin{array}{cccc}
0 & -i\frac{e^{-i\phi/2}}{\sqrt{\frac{p_0-p_z}{m}}} & \frac{e^{i\phi/2}}{\sqrt{\frac{p_0-p_z}{m}}} & 0
\end{array} \right),
\end{eqnarray} as its dual, whose construction obeys the same previous prescription, i. e., $\chi^{S/A}_{h}(p^{\mu})=[\Xi(p^{\mu})_{VSR}\chi^{S/A}_{h}(p^{\mu})]^{\dag}\gamma_0$. All the aspects investigated in the main text have a parallel here. The only differences are the general boost of VSR, which reads 
\begin{eqnarray}\label{lalala}
\mathcal{V}_{VSR} = \left(\begin{array}{cccc}
\sqrt{\frac{m}{p_0-p_z}} & \frac{p_1-ip_2}{\sqrt{m(p_0-p_3)}} & 0 & 0 \\ 
0 & \sqrt{\frac{p_0-p_z}{m}} & 0 & 0 \\ 
0 & 0 & \sqrt{\frac{p_0-p_z}{m}} & 0 \\ 
0 & 0 & -\frac{p_1+ip_2}{\sqrt{m(p_0-p_3)}} & \sqrt{\frac{m}{p_0-p_z}}
\end{array}  \right),
\end{eqnarray} and the $\Xi(p^{\mu})_{VSR}$ operator. In order to find this last operator, it is possible to see that starting from Eq. (\ref{XI}), but replacing the usual spinors by VSR Elko spinors (just as (\ref{vixi})) in the composition of $\Xi(p^{\mu})_{VSR}$, one arrives at  
\begin{eqnarray}
\Xi(p^{\mu})_{VSR}= \left( \begin {array}{cccc} {\frac {i \left( p_{{x}}-ip_{{y}}
 \right)e^{i\phi}}{m}}&{\frac {-i \left( p_{{x}}-ip_{{y
}} \right) ^{2}{{\rm e}^{i\phi}}}{m \left( p_{{0}}-p_{{z}} \right) }}-
{\frac {im{{\rm e}^{-i\phi}}}{p_{{0}}-p_{{z}}}}&0&0
\\ \noalign{\medskip}{\frac {i \left( p_{{0}}-p_{{z}} \right) {{\rm e}
^{i\phi}}}{m}}&{-\frac {i \left( p_{{x}}-ip_{{y}}
 \right)e^{i\phi}}{m}}&0&0\\ \noalign{\medskip}0&0&{-\frac {i \left( p_{{x}}+ip_{{y}}
 \right)e^{-i\phi}}{m}}&{\frac {-i \left( p_{
{0}}-p_{{z}} \right) {{\rm e}^{-i\phi}}}{m}}\\ \noalign{\medskip}0&0&{
\frac {i \left( p_{{x}}+ip_{{y}} \right) ^{2}{{\rm e}^{-i\phi}}}{m
 \left( p_{{0}}-p_{{z}} \right) }}+{\frac {i{{\rm e}^{i\phi}}m}{p_{{0}
}-p_{{z}}}}&{\frac {i \left( p_{{x}}+ip_{{y}}
 \right)e^{-i\phi}}{m}}\end{array}\right).\nonumber\\
\end{eqnarray} This operator acts in VSR spinors just as the $\Xi(p^{\mu})$ operator does in usual Elko spinors: it changes the spinor type, obeys $\Xi^{2}(p^{\mu})_{VSR}=\mathbbm{1}$ and have determinant equal to one, ensuring the existence of the inverse. 

With such ingredients, it is possible to see that all the previous steps are repeated here, namely: the FPK identities are not respected, except if the Clifford basis undergoes the same deformation (with $\Xi(p^{\mu})_{VSR}$ replacing $\Xi(p^{\mu})$); $\sigma_{VSR}$ and $K^{\mu}_{VSR}$ are non null real quantities, $\omega_{VSR}$ is zero, and the remain bilinears are imaginary. Besides, as the relation (\ref{eqdiracElko}) still holds, the covariance structure is also the same, provided (\ref{lalala}) instead of the usual Lorentz transformation.

\section{Some properties of Fierz-Pauli-Kofink Identities}\label{apendiceB}
As it can be seen in \cite{bilineares}, it is possible to build the basis vectors for the mass-dimension-one spinor's case using the usual Clifford algebra. For any element $\Gamma$ belonging to such algebra, the FPK relation reads
\begin{eqnarray}
(\stackrel{\neg}{\lambda_{h}} \Gamma \gamma_{\mu}\lambda_{h})=(\stackrel{\neg}{\lambda_{h}} \Gamma \lambda_{h})\lambda_{h}-(\stackrel{\neg}{\lambda_{h}} \Gamma\gamma_{5}\lambda_{h})\gamma_{5}\lambda_{h},
\end{eqnarray}
where $\Gamma \in \{ \mathbbm{1}, \gamma_{5}, \gamma_{\mu}, \Xi(p^{\mu})\gamma_{5}\gamma_{\mu}\Xi(p^{\mu}) \}$. From the above relation we obtain the following:
\begin{eqnarray}
J^2=A^{2}+B^{2},
\end{eqnarray} 
and we also have
\begin{eqnarray}
(\stackrel{\neg}{\lambda_{h}}\Xi(p^{\mu})\gamma_{5}\gamma_{\mu}\Xi(p^{\mu})\lambda_{h})\gamma^{\mu}\lambda_{h}&=&(\stackrel{\neg}{\lambda_{h}}\Xi(p^{\mu})\gamma_{5}\Xi(p^{\mu})\lambda_{h})\lambda_{h}-(\stackrel{\neg}{\lambda_{h}}\Xi^2(p^{\mu})\lambda_{h})\gamma_{5}\lambda_{h},\nonumber\\
&=&(\stackrel{\neg}{\lambda_{h}}\gamma_{5}\lambda_{h})\lambda_{h}-(\stackrel{\neg}{\lambda_{h}}\lambda_{h})\gamma_{5}\lambda_{h}.\label{L}
\end{eqnarray}
Note that
\begin{eqnarray*}
[\Xi(p^{\mu}),\gamma_{5}]=0, \quad \{\gamma_{\mu},\gamma_{5}\}=0 \;\;\;\mbox{and} \quad \Xi^2(p^{\mu})=\mathbbm{1},
\end{eqnarray*}
with such relations at hands, one is able to write
\begin{eqnarray}
(\stackrel{\neg}{\lambda_{h}}\Xi(p^{\mu})\gamma_{5}\gamma_{\mu}\Xi(p^{\mu})\lambda_{h})\gamma^{\mu}\gamma^{5}\lambda_{h}&=&-(\stackrel{\neg}{\lambda_{h}}\Xi(p^{\mu})\gamma_{5}\gamma_{\mu}\Xi(p^{\mu})\gamma_{5}\lambda_{h})\gamma^{\mu}\lambda_{h},\nonumber\\
&=&(\stackrel{\neg}{\lambda_{h}}\Xi(p^{\mu})\gamma_{\mu}\Xi(p^{\mu})\lambda_{h})\gamma^{\mu}\lambda_{h}.\label{LJ}
\end{eqnarray}

Finally using relations \eqref{L} and \eqref{LJ}, we obtain
\begin{eqnarray}
\label{A5}J_{\mu}\gamma^{\mu}\lambda_{L}&=&(A-iB)\lambda_{R},\\
\label{A6}J_{\mu}\gamma^{\mu}\lambda_{R}&=&(A+iB)\lambda_{L}, \\
\label{A7}K_{\mu}\gamma^{\mu}\lambda_{L}&=&-(A-iB)\lambda_{R},\\
\label{A8}K_{\mu}\gamma^{\mu}\lambda_{R}&=&(A+iB)\lambda_{L}.
\end{eqnarray}

\bibliographystyle{unsrt}
\bibliography{refs}
\end{document}